\newcolumntype{x}[1]{>{\centering\arraybackslash}p{#1}}
\newtheorem{thm}{Theorem}
\newtheorem*{thm*}{Theorem}
\newtheorem{prop}[thm]{Proposition}
\newtheorem*{prop*}{Proposition}
\newtheorem{lemma}[thm]{Lemma}
\newtheorem*{lemma*}{Lemma}
\newtheorem{cor}[thm]{Corollary}
\newtheorem*{cor*}{Corollary}
\newtheorem{cj}[thm]{Conjecture}
\newtheorem*{cj*}{Conjecture}
\newtheorem*{Def*}{Definition}
\def\thmhead@plain#1#2#3{%
  \thmname{#1}\thmnumber{\@ifnotempty{#1}{ }\@upn{#2}}%
  \thmnote{ {\the\thm@notefont#3}}}
\let\thmhead\thmhead@plain
\theoremstyle{definition}
\newtheorem*{rem}{Remark}
\newcommand{\bb}{\begin{equation}\begin{aligned}}
\newcommand{\bbb}{\begin{equation*}\begin{aligned}}
\newcommand{\ee}{\end{aligned}\end{equation}}
\newcommand{\eee}{\end{aligned}\end{equation*}}
\let\textleq\relax
\let\textgeq\relax
\let\texteq\relax
\newcommand{\texteq}[1]{\stackrel{\mathclap{\scriptsize \mbox{#1}}}{=}}
\newcommand{\textleq}[1]{\stackrel{\mathclap{\scriptsize \mbox{#1}}}{\leq}}
\newcommand{\textgeq}[1]{\stackrel{\mathclap{\scriptsize \mbox{#1}}}{\geq}}
\newcommand{\ketbra}[1]{\ket{#1}\!\!\bra{#1}}
\newcommand{\ketbraa}[2]{\ket{#1}\!\!\bra{#2}}
\newcommand{\ve}{\varepsilon}
\newcommand{\id}{\mathds{1}}
\newcommand{\R}{\mathds{R}}
\newcommand{\N}{\mathds{N}}
\newcommand{\C}{\mathds{C}}
\newcommand{\cptp}{\mathrm{CPTP}}
\newcommand{\LR}{L\!R}
\DeclareMathOperator{\Tr}{Tr}
\DeclareMathOperator{\co}{conv}
\DeclareMathAlphabet{\pazocal}{OMS}{zplm}{m}{n}
\DeclareMathOperator{\supp}{supp}
\DeclareMathOperator{\spec}{spec}
\newcommand{\HH}{\pazocal{H}}
\newcommand{\D}{\pazocal{D}}
\newcommand{\MM}{\pazocal{M}}
\newcommand{\II}{\pazocal{I}}
\newcommand{\LL}{\pazocal{L}}
\newcommand{\lsmatrix}{\left(\begin{smallmatrix}}
\newcommand{\rsmatrix}{\end{smallmatrix}\right)}
\newcommand*\rel@kern[1]{\kern#1\dimexpr\macc@kerna}
\newcommand*\widebar[1]{%
  \begingroup
  \def\mathaccent##1##2{%
    \rel@kern{0.8}%
    \overline{\rel@kern{-0.8}\macc@nucleus\rel@kern{0.2}}%
    \rel@kern{-0.2}%
  }%
  \macc@depth\@ne
  \let\math@bgroup\@empty \let\math@egroup\macc@set@skewchar
  \mathsurround\z@ \frozen@everymath{\mathgroup\macc@group\relax}%
  \macc@set@skewchar\relax
  \let\mathaccentV\macc@nested@a
  \macc@nested@a\relax111{#1}%
  \endgroup
}
\definecolor{Blues5seq1}{RGB}{239,243,255}
\definecolor{Blues5seq2}{RGB}{189,215,231}
\definecolor{Blues5seq3}{RGB}{107,174,214}
\definecolor{Blues5seq4}{RGB}{49,130,189}
\definecolor{Blues5seq5}{RGB}{8,81,156}
\definecolor{Greens5seq1}{RGB}{237,248,233}
\definecolor{Greens5seq2}{RGB}{186,228,179}
\definecolor{Greens5seq3}{RGB}{116,196,118}
\definecolor{Greens5seq4}{RGB}{49,163,84}
\definecolor{Greens5seq5}{RGB}{0,109,44}
\definecolor{Reds5seq1}{RGB}{254,229,217}
\definecolor{Reds5seq2}{RGB}{252,174,145}
\definecolor{Reds5seq3}{RGB}{251,106,74}
\definecolor{Reds5seq4}{RGB}{222,45,38}
\definecolor{Reds5seq5}{RGB}{165,15,21}
\definecolor{Yellow}{RGB}{233, 233, 150}
\definecolor{Orange}{RGB}{255, 230, 179}
\definecolor{Blueish}{RGB}{192, 242, 217}
\definecolor{goldenyellow}{rgb}{1.0, 0.87, 0.0}
\tikzset{cross/.style={cross out, draw=black, minimum size=2*(#1-\pgflinewidth), inner sep=0pt, outer sep=0pt},
cross/.default={1pt}}
\def\l@subsection#1#2{}
\def\l@subsubsection#1#2{}
\newcommand{\SEP}{\pazocal{S}}
\newcommand{\sepp}{\mathrm{NE}}
\newcommand{\rng}{\mathrm{RNG}}
\DeclareMathAlphabet\mathbfcal{OMS}{cmsy}{b}{n}
\newcommand{\bea}{\begin{eqnarray}}
\newcommand{\eea}{\end{eqnarray}}
\newcommand{\be}{\begin{equation}}
\newcommand{\ba}{\begin{equation}\begin{aligned}}
\newcommand{\ea}{\end{aligned}\end{equation}}
\newcommand{\mX}{\mathcal{X}}
\newcommand{\ben}{\begin{enumerate}}
\newcommand{\een}{\end{enumerate}}
\newcommand{\eqdef}{\coloneqq}
\let\epsilon\varepsilon
\newcommand{\limc}[1]{{\color{red!80!black} #1}}
\let\limc\relax
\newcommand{\deff}[1]{\textbf{\emph{#1}}}
\newcommand{\Pseudo}{\pazocal{Q}}
\definecolor{quantumviolet}{HTML}{53257F}
\definecolor{quantumgray}{HTML}{555555}
\def\frontmatter@title@above{\noindent}
\def\frontmatter@title@format{\normalfont\sffamily\huge\noindent\hyphenpenalty=50000}
\def\frontmatter@preabstractspace{1.5em}
\def\frontmatter@authorformat{\def\@makefnmark{\relax}{}\vspace*{1.5em}\sffamily\raggedright\large}
\def\frontmatter@above@affilgroup{\vspace*{1.5em}}
\def\frontmatter@affiliationfont{\normalfont\sffamily\color{quantumgray}\footnotesize\noindent}
\def\frontmatter@title@produce{%
 \begingroup
  \frontmatter@title@above
  \color{quantumviolet}
  \frontmatter@title@format
  {\@title}
  \unskip
  \phantomsection\expandafter\@argswap@val\expandafter{\@title}{\addcontentsline{toc}{title}}%
  \@ifx{\@title@aux\@title@aux@cleared}{}{%
   \expandafter\frontmatter@footnote\expandafter{\@title@aux}%
  }%
  \par
  \frontmatter@title@below
 \endgroup
}%
\def\frontmatter@above@affiliation@script{%
 \addvspace{1em}%
 \skip@\@flushglue
 \@flushglue\z@ plus\hsize\relax
 \@flushglue\skip@
 \noindent
}%
\begin{document}

\title{\href{https://quantum-journal.org/?s=On\%20a\%20gap\%20in\%20the\%20proof\%20of\%20the\%20generalised\%20quantum\%20Stein\%27s\%20lemma\%20and\%20its\%20consequences\%20for\%20the\%20reversibility\%20of\%20quantum\%20resources&reason=title-click}{On a gap in the proof of the generalised quantum Stein's lemma and its consequences for the reversibility of quantum resources}}

\author{Mario Berta}
\affiliation{Institute for Quantum Information, RWTH Aachen University, Aachen, Germany}
\affiliation{Department of Computing, Imperial College London, London, UK}

\author{Fernando G.~S.~L.~Brand\~{a}o}
\affiliation{Institute for Quantum Information and Matter, California Institute of Technology, Pasadena, CA, USA}
\affiliation{AWS Center for Quantum Computing, Pasadena, CA, USA}

\author{Gilad Gour}
\affiliation{Department of Mathematics and Statistics, Institute for Quantum Science and Technology, University of Calgary, AB, Canada T2N 1N4}

\author{Ludovico~Lami}
\affiliation{Institut f\"{u}r Theoretische Physik und IQST, Universit\"{a}t Ulm, Albert-Einstein-Allee 11, D-89069 Ulm, Germany}
\affiliation{QuSoft, Science Park 123, 1098 XG Amsterdam, The Netherlands}
\affiliation{Korteweg--de Vries Institute for Mathematics, University of Amsterdam, Science Park 105-107, 1098 XG Amsterdam, The Netherlands}
\affiliation{Institute for Theoretical Physics, University of Amsterdam, Science Park 904, 1098 XH Amsterdam, The Netherlands}

\author{Martin~B.~Plenio}
\affiliation{Institut f\"{u}r Theoretische Physik und IQST, Universit\"{a}t Ulm, Albert-Einstein-Allee 11, D-89069 Ulm, Germany}

\author{Bartosz Regula}
\affiliation{Mathematical Quantum Information RIKEN Hakubi Research Team, RIKEN Cluster for Pioneering Research (CPR) and RIKEN Center for Quantum Computing (RQC), Wako, Saitama 351-0198, Japan}
\affiliation{Department of Physics, Graduate School of Science, The University of Tokyo, Bunkyo-ku, Tokyo 113-0033, Japan}

\author{Marco Tomamichel}
\affiliation{Center for Quantum Technologies, National University of Singapore, Singapore}
\affiliation{Department of Electrical and Computer Engineering, College of Design and Engineering, National University of Singapore, Singapore}


\begin{abstract}
We show that the proof of the generalised quantum Stein's lemma [Brand\~ao \& Plenio, \href{https://link.springer.com/article/10.1007/s00220-010-1005-z}{Commun.\ Math.\ Phys.\ 295, 791 (2010)}] is not correct due to a gap in the argument leading to Lemma~III.9. Hence, the main achievability result of Brand\~ao \& Plenio is not known to hold. This puts into question a number of established results in the literature, in particular the reversibility of quantum entanglement [Brand\~ao \& Plenio, \href{https://link.springer.com/article/10.1007/s00220-010-1003-1}{Commun.\ Math.\ Phys.\ 295, 829 (2010)}; \href{https://www.nature.com/articles/nphys1100}{Nat.\ Phys.\ 4, 873 (2008)}] and of general quantum resources [Brand\~ao \& Gour, \href{https://journals.aps.org/prl/abstract/10.1103/PhysRevLett.115.070503}{Phys.\ Rev.\ Lett.\ 115, 070503 (2015)}] under asymptotically resource non-generating operations. We discuss potential ways to recover variants of the newly unsettled results using other approaches.
\end{abstract}

\maketitle


\section{Overview}


The question of reversibility is central in the study of how different quantum resources can be manipulated.
Asymptotic reversibility concerns the existence of processes that can interconvert two quantum states in the asymptotic limit of infinitely many independent and identically distributed (i.i.d.)\ copies. 
Whenever such reversible conversion is possible, the rate at which it can be realised constitutes a fundamental quantity that completely governs how quantum states can be converted into each other. In the theory of thermodynamics, this quantity is precisely the \emph{entropy} of a system, and generalisations of this concept underpin the study of different quantum physical phenomena.

One of the most significant contributions to the understanding of reversibility of quantum resources was the generalised quantum Stein's lemma of~\cite{Brandao2010} and the ensuing frameworks of~\cite{BrandaoPlenio1,BrandaoPlenio2,Brandao-Gour}. These results claimed that \emph{any} quantum resource satisfying a small number of physically-motivated axioms can be reversibly manipulated, as long as the constraints on resource transformations are suitably relaxed. This in particular implied that any such physical resource admits a notion of an entropy-like monotone --- a single quantity that determines asymptotic resource convertibility, and functions as the unique measure of the given resource in the asymptotic limit.

On the technical side, the main contribution of~\cite{Brandao2010} was the extension of an important result in quantum hypothesis testing --- the quantum Stein's lemma~\cite{Hiai1991,Ogawa2000}, which connects the distinguishability of quantum states with the quantum relative entropy --- to the setting of composite hypothesis testing, where the alternative hypothesis, instead of the usual i.i.d.\ form, is given by a more general convex set of quantum states. This allowed for the connection between general quantum resources and quantifiers based on the relative entropy to be established, leading directly to the aforementioned results of~\cite{BrandaoPlenio1,BrandaoPlenio2,Brandao-Gour} and revealing such entropy-based monotones as the quantities governing the reversibility of resource theories.

However, an issue has recently been found in the claimed proof of the generalised quantum Stein's lemma in~\cite{Brandao2010}. Specifically, after the appearance of the first version of the preprint~\cite{fang_preprint} that studied a related setting using the methods of~\cite{Brandao2010}, one of us identified a mistake in~\cite[Lemma~16]{fang_preprint}, which then led to the discovery that the original result~\cite[Lemma~III.9]{Brandao2010} is incorrect. This means that the main claims of~\cite{Brandao2010}, and in particular the generalised quantum Stein's lemma introduced therein, are not known to be correct, and the validity of a number of results that build on those findings is thus directly put into question.

In this work, we explain the error in~\cite[Lemma~III.9]{Brandao2010} and discuss how the main results of the reversible frameworks~\cite{BrandaoPlenio1,BrandaoPlenio2,Brandao-Gour} can be recovered in some contexts under additional assumptions. We discuss in detail where the issue occurs, what it would take to rectify it, and present some alternative strategies that could be used to potentially recover the findings of~\cite{Brandao2010}. In particular, we use other variants of a composite quantum Stein's lemma as found in~\cite{brandao_adversarial,berta_composite} to provide alternatives to the main finding of~\cite{Brandao2010} that, although more restrictive, allow us to establish a reversibility result for quantum resource theories somewhat analogous to, but weaker than, that of~\cite{BrandaoPlenio1,BrandaoPlenio2,Brandao-Gour}. We also discuss which of the results in the literature are, by virtue of relying on~\cite{Brandao2010}, no longer known to be true, and which related results are independent of \cite{Brandao2010}'s correctness.

The reader exclusively interested in the flaw of the argument leading to~\cite[Lemma~III.9]{Brandao2010} can directly jump to Section \ref{sec:gap}.


\subsection*{Summary of findings}

\begin{itemize}
    \item The proof of the generalised quantum Stein's lemma of Ref.~\cite{Brandao2010} is not correct. Due to a gap in the proof of~\cite[Lemma~III.9]{Brandao2010}, the proof of \cite[Lemma~III.7]{Brandao2010} is invalidated. Hence, the main achievability result, i.e., the direct part of \cite[Proposition~III.1]{Brandao2010}, is not known to be correct. The converse part of \cite[Proposition~III.1]{Brandao2010}, and in particular \cite[Corollary~III.3]{Brandao2010}, are unaffected. The issue is discussed in detail in Section~\ref{sec:gap}.
    
    \item This error affects other published results that directly rely on the generalised quantum Stein's lemma, and notably the framework for reversible quantum resource theories delineated in~\cite{BrandaoPlenio1,BrandaoPlenio2,Brandao-Gour}. Without any alternative 
  arguments, the main results of these works can no longer be considered proven. 
  The connection between hypothesis testing and resource reversibility is discussed in Sections~\ref{sec:asymp_manipulation}--\ref{sec:hyptest_reversibility}.
  
    \item The reversibility results are recovered whenever the generalised quantum Stein's lemma can be replaced with another Stein--type hypothesis testing result that identifies the regularised relative entropy of a resource as the optimal hypothesis testing rate. Examples of this are the restricted-measurement Stein's lemma of~\cite{brandao_adversarial}, or the variants presented in~\cite{berta_composite}. 
    The former allows us to give a general result for the reversibility of a modified resource theory of quantum entanglement that obeys certain additional restrictions on the tensor product structure in the many-copy setting. Albeit weaker than the original findings of~\cite{BrandaoPlenio1,BrandaoPlenio2,Brandao-Gour}, this provides some evidence for the validity of Stein--type results in the theory of asymptotic entanglement manipulation. 
    As a noteworthy special case, the methods of~\cite{berta_composite} allow us to give an alternative proof of the reversibility of the theory of quantum coherence, fully recovering the claims of~\cite{Brandao-Gour} in this setting~(cf.~\cite{Chitambar-reversible}). 
    This is discussed in Section~\ref{sec:alternative-stein}.
    
    \item Although the proof method of~\cite{Brandao2010} has a gap that we are unable to fix in full generality, we have also not been able to rule out the main result of~\cite[Proposition~III.1]{Brandao2010} itself. That is, there remains a possibility that the generalised quantum Stein's lemma is correct. The conclusive determination of the validity of~\cite{Brandao2010} is thus a pressing open problem.
    
    \item Some results in the literature, although using similar methods or seemingly relying on the findings of~\cite{Brandao2010}, are independent of the generalised quantum Stein's lemma and therefore unaffected by the uncovered gap in the proof. This applies, among others, to the faithfulness of the regularised relative entropy of entanglement~\cite{Piani2009}, the asymptotic equipartition property of the max-relative entropy of entanglement and other resources~\cite{Brandao2010,datta_2009-2}, and the irreversibility of entanglement under non-entangling operations~\cite{irreversibility}. 
    An important result that makes direct use of the general quantum Stein's lemma, namely the first proof of the faithfulness of squashed entanglement~\cite{faithful}, can be saved with some modifications that we describe in detail (the same conclusion has been reached independently~\cite{personal} by the authors of~\cite{faithful}). All of this is discussed in Section~\ref{sec:comments}.
\end{itemize}

We start in the following Section~\ref{sec:notation} by introducing our notation and some background material.


\section{Notation and setting}\label{sec:notation}

\subsection{Mathematics}

A quantum system $A$ is mathematically represented by a separable Hilbert space $\HH_A$. In this paper we shall always consider finite-dimensional Hilbert spaces. Quantum states of $A$ are represented by density operators, i.e.\ positive semi-definite operators on $\HH_A$ with trace one. We will denote the set of density operators on $\HH_A$ with $\D(\HH_A)$. A particular class of density operators is that formed by pure states, i.e.\ rank-one projectors $\ketbra{\psi}_A$ onto a normalised vector $\ket{\psi}_A\in \HH_A$ with $\braket{\psi|\psi}=1$. The distance between two density operators $\rho,\sigma\in \D(\HH_A)$ is quantified by the trace norm distance $ \frac12\left\|\rho-\sigma\right\|_1$, where $\|X\|_1\coloneqq \Tr \sqrt{X^\dag X}$ is the trace norm; this is endowed with operational meaning in the context of binary state discrimination by the Helstrom--Holevo theorem~\cite{HELSTROM, Holevo1976}. Quantum measurements are mathematically represented by positive operator-valued measures (POVMs) of the form $(E_x)_x$. Here $x\in \pazocal{X}$ is an index with finite range, for each $x$ the operator $E_x$ is positive semi-definite, and moreover $\sum_x E_x = \id$. More generally, physical transformations of states of system $A$ into states of system $A'$ are represented by quantum channels, i.e.\ completely positive trace preserving (CPTP) linear maps $\Lambda: \LL(\HH_A)\to \LL(\HH_{A'})$, where $\LL(\HH)$ denotes the space of linear operators (matrices) over $\HH$. The set of CPTP maps from $\LL(\HH_A)$ to $\LL(\HH_{A'})$ will be denoted by $\cptp(A\to A')$.


\subsection{Quantum entropy}\label{sec:entropy}

The (von Neumann) \deff{entropy} of a quantum state $\rho\in \D(\HH)$ is given by $S(\rho)\coloneqq -\Tr \rho \log \rho$ (with the convention that $0 \log 0 = 0$). A more general function is the (Umegaki) \deff{relative entropy}, defined for two states $\rho,\sigma\in \D(\HH)$ by
\bb
D(\rho\|\sigma) \coloneqq \left\{ \begin{array}{ll} \Tr \rho\left( \log \rho - \log \sigma\right) & \quad \supp\rho\subseteq \supp\sigma, \\[1ex] +\infty & \quad \text{otherwise.} \end{array}\right.
\label{relative_entropy}
\ee
Here, $\supp X$ denotes the support of an operator $X$. A different approach to the problem of defining a quantum relative entropy is that of measuring the two states and computing the classical Kullback--Leibler divergence~\cite{Kullback-Leibler} between the resulting probability distributions. If we denote the set of allowed POVMs by $\mathds{E}$, the quantity one obtains is the \deff{$\mathds{E}$-measured relative entropy}~\cite{Piani2009} (see also~\cite{Vedral1998})
\bb
D^{\mathds{E}} (\rho\|\sigma) \coloneqq \sup_{(E_x)_x\in \mathds{E}} \sum_x \Tr \rho E_x \log \frac{\Tr \rho E_x}{\Tr \sigma E_x}\, .
\label{measured_relative_entropy}
\ee
A special case of the above formula is obtained when $\mathds{E}=\mathds{ALL}$ comprises all POVMs, i.e.\ all (finite) collections $(E_x)_x$ of positive semi-definite operators $E_x\geq 0$ such that $\sum_x E_x=\id$~\cite{Donald1986, Petz-old, Berta2017, nonclassicality}. While $D^{\mathds{ALL}}(\rho\|\sigma) \leq D(\rho\|\sigma)$ for all $\rho,\sigma$ thanks to the data processing inequality~\cite{lieb73a,lieb73b,lieb73c,Lindblad-monotonicity}, it is known~\cite{Berta2017} that equality holds if and only if $\rho$ and $\sigma$ commute, i.e.\ if $[\rho,\sigma]=0$.

There are at least two other quantities that are related to the relative entropy and that will be widely used in this paper. One is the \deff{max-relative entropy}, given by~\cite{datta_2009}
\bb
D_{\max}(\rho\|\sigma) \coloneqq \log\min\left\{\lambda :\, \rho\leq \lambda \sigma \right\} .
\label{max-relative_entropy}
\ee
The other is the \deff{hypothesis testing relative entropy}, defined for a parameter $\epsilon\in [0,1]$ by~\cite{buscemi_2010, wang_2012}
\bb\label{eq:DH_def}
D_H^\epsilon(\rho\|\sigma) \coloneqq -\log \min\left\{ \Tr M\sigma:\, 0\leq M\leq \id,\ \Tr M\rho\geq 1-\epsilon \right\} .
\ee
The purpose of employing the two quantities is that, just as the relative entropy is often found to exactly quantify the asymptotic properties of quantum states in some operational tasks, in the one-shot regime it is $D_{\max}$ and $D^\epsilon_H$ that play a similar role in characterising the operational properties of states.
For an introduction to these quantities and their operational meaning, we refer the reader to~\cite{buscemi_2010,wang_2012,TomamichelPhD,tomamichel_2013,anshu_2019}.


\subsection{Entanglement}

Let $AB$ be a bipartite quantum system with Hilbert space $\HH_{AB}\coloneqq \HH_A\otimes \HH_B$.  The set of \deff{separable states} on $AB$ is just the convex hull of all product states, i.e.~\cite{Werner}
\begin{equation}
    \SEP_{AB} \coloneqq \mathrm{conv}\left\{ \ketbra{\psi}_A \otimes \ketbra{\phi}_B:\, \ket{\psi}_A\in \HH_A,\, \ket{\phi}_B\in \HH_B,\, \braket{\psi|\psi}=1=\braket{\phi|\phi} \right\} .
    \label{SEP}
\end{equation}
It is a fundamental fact of quantum mechanics that not all states on $AB$ are separable. A state that is not separable is called \deff{entangled}. An especially simple entangled state is the \deff{maximally entangled state} on a bipartite system $AB$ with Hilbert space $\HH_A\otimes \HH_B = \C^d\otimes \C^d$, defined by
\bb
\Phi_{AB} \coloneqq \ketbra{\Phi}_{AB}\, ,\qquad \ket{\Phi}_{AB}\coloneqq \frac{1}{\sqrt{d}}\sum_{i=1}^d \ket{ii}\, .
\label{maximally_entangled_state}
\ee
A maximally entangled state with $d=2$ is called an \deff{entanglement bit (`ebit')}. We will denote it by 
\bb
\Phi_2 \coloneqq \ketbra{\Phi_2}\, ,\qquad \ket{\Phi_2}\coloneqq \frac{1}{\sqrt2}\left( \ket{00} + \ket{11} \right) .
\ee


\subsection{General quantum resources} \label{general_resources_sec}

Entanglement is the first example of a quantum \emph{resource} to have been studied in depth. However, it is fruitful to formulate a framework capable of treating all quantum resources on an equal footing, and to establish results that can reveal broad similarities between resources of seemingly different types. The general framework of \deff{quantum resource theories}~\cite{RT-review} is designed to do so. Here, one usually identifies a family of systems of interest\,---\,with Hilbert spaces we call generically $\HH$\,---\,and over each $\HH$ a set $\MM\subseteq \D(\HH)$ of \deff{free states}, i.e.\ quantum states that are inexpensive to prepare and thus are available at will. To discuss asymptotic transformation of resources, it is important that for each $n$ we can consider the system with Hilbert space $\HH^{\otimes n}$ and the associated set of states $\MM_n\subseteq \D(\HH^{\otimes n})$. The family of sets $(\MM_n)_n$ should satisfy some elementary properties~\cite[p.~795]{Brandao2010}:
\begin{enumerate}
    \item Each $\MM_n$ is a convex and closed subset of $\D(\HH^{\otimes n})$, and hence also compact (since $\HH$ is finite dimensional).
    \item Each $\MM_n$ contains some i.i.d.\ full-rank state, i.e.\ some state of the form $\sigma^{\otimes n}$ with $\sigma>0$.
    \item The family $(\MM_{n})_n$ is closed under partial traces, i.e.\ if $\rho\in \MM_{n+1}$ then $\Tr_{n+1}\rho\in \MM_n$, where $\Tr_{n+1}$ denotes the partial trace over the last subsystem.
    \item The family $(\MM_{n})_n$ is closed under tensor products, i.e.\ if $\rho\in \MM_n$ and $\sigma\in \MM_m$ then $\rho\otimes \sigma\in \MM_{n+m}$.
    \item Each $\MM_n$ is closed under permutations, i.e.\ if $\rho\in \MM_n$ and $\pi\in S_n$ denotes an arbitrary permutation of a set of $n$ elements, then also $P_\pi\rho P_\pi^\dag\in \MM_n$, where $P_\pi$ is the unitary implementing $\pi$ over $\HH^{\otimes n}$. 
\end{enumerate}

\begin{rem}
The sets of separable states over all bipartite Hilbert spaces, defined by~\eqref{SEP}, satisfy the above Axioms~1--5.
\end{rem}

How do we quantify the amount of resource contained in a resourceful state $\rho\notin \MM$? We shall now discuss several resource monotones that can be used to this end. The \deff{generalised} (or \deff{global}) \deff{resource robustness}  
of a state $\rho\in \D(\HH)$ is defined as~\cite{VidalTarrach, harrow_2003}
\bb
R_\MM(\rho) \coloneqq \min\left\{ s\geq 0:\, \frac{1}{1+s}\left(\rho + s\sigma \right)\in \MM,\, \sigma \in \D(\HH) \right\} .
\label{g_robustness}
\ee
This is not to be confused with the \deff{standard resource robustness}, originally defined by Vidal and Tarrach for the case of entanglement as~\cite{VidalTarrach}
\bb
R_\MM^s(\rho) \coloneqq \min\left\{ s\geq 0:\, \frac{1}{1+s}\left(\rho + s\sigma \right)\in \MM,\, \sigma \in \MM \right\} .
\label{std_robustness}
\ee
Note that the only difference between~\eqref{g_robustness} and~\eqref{std_robustness} is that the state $\sigma$ in~\eqref{std_robustness} is also required to be free, while it can be an arbitrary density matrix in~\eqref{g_robustness}. Both robustnesses are faithful quantities, i.e.\ $R_\MM(\omega)=0$ if and only if $R_\MM^s(\omega)=0$, if and only if $\omega\in \MM$.

Note that it is possible to re-write~\eqref{g_robustness} as
\bb
\LR_\MM(\rho) \coloneqq \log\left(1+R_\MM(\rho)\right) = \min_{\sigma\in \MM} D_{\max}(\rho\|\sigma)\, ,
\ee
where $D_{\max}$ is the max-relative entropy given by~\eqref{max-relative_entropy}. That is, $\LR_\MM$ is just the `distance' of the state $\rho$ from the set $\MM$ as measured by the max-relative entropy. If we use instead the Umegaki relative entropy, we obtain the \deff{relative entropy of resource}, given by~\cite{Vedral1997} 
\bb
D_\MM(\rho) \coloneqq \min_{\sigma\in \MM} D(\rho\|\sigma)\, .
\label{relative_entropy_resource}
\ee
As it often happens in quantum information, the above expression needs to be regularised in order to be endowed with an operational interpretation in the asymptotic setting. That is, one needs to consider
\bb
D_\MM^\infty(\rho) \coloneqq \lim_{n\to\infty} \frac1n\, D_ \MM\left( \rho^{\otimes n} \right) .
\label{regularised_relative_entropy_resource}
\ee
Thanks to Fekete's lemma~\cite{Fekete1923}, the limit on the right-hand side exists and can be alternatively computed as an $\inf_{n\in\N}$. This is because the function $f(n) \coloneqq D_\MM(\rho^{\otimes n})$ is sub-additive, i.e.\ it satisfies $f(n+m)\leq f(n)+f(m)$, due to Axiom~4 above. Moreover, thanks to Axiom~2 we see that $D^\infty_\MM(\rho) \leq D_\MM(\rho)<\infty$ holds for every state $\rho$.


\subsection{Asymptotic manipulation of resources}\label{sec:asymp_manipulation}

To discuss how quantum resources can be transformed into each other by quantum operations, it is useful to start by looking at the important special case of entanglement. Entanglement can be manipulated with the two fundamental primitives of entanglement distillation and entanglement dilution~\cite{Bennett-distillation, Bennett-distillation-mixed, Bennett-error-correction}. The former is concerned with the transformation of a large number of i.i.d.\ copies of a certain state $\rho_{AB}$ into as many ebits as possible with vanishing error; conversely, the latter deals with the opposite process of turning a large number of ebits into as many copies of $\rho_{AB}$ as possible, again with vanishing error. In all cases, the transformation error is quantified by means of the operationally meaningful trace norm distance.

Traditionally, these transformations are effected by means of local operations assisted by classical communication (LOCCs). While the set of LOCC is well motivated operationally, in~\cite{BrandaoPlenio1, BrandaoPlenio2} a more general scheme is investigated, which involves a larger set of operations. Namely, given two bipartite systems $AB$ and $A'B'$ and some $\delta>0$, one defines the set of \deff{$\boldsymbol{\delta}$-non-entangling operations} from $AB$ to $A'B'$ as
\bb
\sepp_\delta \left(AB\to A'B'\right) \coloneqq \left\{ \Lambda\in \cptp\left(AB\to A'B'\right):\ R_\SEP\left(\Lambda(\sigma_{AB})\right) \leq \delta \ \ \forall \sigma_{AB} \in \SEP_{AB} \right\} ,
\label{delta_non-entangling}
\ee
where $R_\SEP$ is the generalised robustness of entanglement, given by~\eqref{g_robustness} with the choice $\MM=\SEP$. Since this is faithful, in the case where $\delta=0$ we obtain the set of non-entangling (or separability-preserving) operations NE~\cite{BrandaoPlenio1, BrandaoPlenio2}. 
The idea behind the definition~\eqref{delta_non-entangling} is that of 
capturing the possibility of small fluctuations in the type of physical transformations the system is undergoing. Mathematically, this translates into a framework where one allows
for the generation of some amount $\delta$ of entanglement, as quantified by the generalised robustness, so as to obtain the even larger set of operations~\eqref{delta_non-entangling}. We note that the choice of $R_\SEP$ as the quantifier of the generated entanglement here is crucial: other choices of monotones can either trivialise the whole framework, as in the case of $D_\MM(\rho)$~\cite[Sec.~V]{BrandaoPlenio2}, or make the theory asymptotically irreversible, as in the case of $R^s_\SEP$~\cite{irreversibility}; we will return to the latter issue shortly.

A necessary condition for the self-consistency of the framework is that the parameter $\delta$ should vanish in the asymptotic limit. We thus define the \deff{distillable entanglement} and the \deff{entanglement cost under asymptotically non-entangling operations} (ANE) by
\begin{align}
E_D^{\mathrm{ANE}}(\rho) &\coloneqq \sup_{(k_n)_n,\, (\delta_n)_n} \left\{ \liminf_{n\to\infty} \frac{k_n}{n}:\, \lim_{n\to\infty} \min_{\Lambda\in \sepp_{\delta_n}} \left\|\Lambda(\rho^{\otimes n}) - \Phi_2^{\otimes k_n} \right\|_1 = 0\, ,\ \lim_{n\to\infty} \delta_n = 0\right\} , \label{distillable_ANE} \\
E_C^{\mathrm{ANE}}(\rho) &\coloneqq \inf_{(k_n)_n,\, (\delta_n)_n} \left\{ \limsup_{n\to\infty} \frac{k_n}{n}:\, \lim_{n\to\infty} \min_{\Lambda\in \sepp_{\delta_n}} \left\|\Lambda(\Phi_2^{\otimes k_n}) - \rho^{\otimes n} \right\|_1 = 0\, ,\ \lim_{n\to\infty} \delta_n = 0\right\} . \label{cost_ANE}
\end{align}

In the case of an arbitrary quantum resource with sets of free states $\MM_A$ over systems $A$, the definition of \deff{$\boldsymbol{\delta}$-resource--non-generating operations} is basically analogous to~\eqref{delta_non-entangling}, with the generalised resource robustness~\eqref{g_robustness} replacing the generalised robustness of entanglement: 
\bb
\rng_\delta \left(A\to A'\right) \coloneqq \left\{ \Lambda\in \cptp\left(A\to A'\right):\ R_\MM\left(\Lambda(\sigma_{A})\right) \leq \delta \ \ \forall \sigma_{A} \in \MM_{A} \right\} .
\label{delta_rng}
\ee
Since for arbitrary resources there may or may not exist a suitable notion of a `maximally resourceful state', instead of the tasks of resource distillation and dilution it is more appropriate to define that of resource conversion. Given two systems $A,A'$ and two states $\rho\in \D(\HH_A)$ and $\omega\in \D(\HH_{A'})$, the \deff{asymptotic transformation rate $\boldsymbol{\rho\to\omega}$ under asymptotically resource non-generating operations} (ARNG) is defined by
\bb
R^{\mathrm{ARNG}}(\rho\to \omega) &\coloneqq \sup_{(k_n)_n,\, (\delta_n)_n} \left\{ \liminf_{n\to\infty} \frac{k_n}{n}:\, \lim_{n\to\infty} \min_{\Lambda\in \rng_{\delta_n}} \left\|\Lambda(\rho^{\otimes n}) - \omega^{\otimes k_n} \right\|_1 = 0\, ,\ \lim_{n\to\infty} \delta_n = 0\right\} .
\label{R_ARNG}
\ee


\subsection{Asymptotic reversibility}

As explained in the introduction, one of the key notions underlying the study of quantum resources is that of asymptotic reversibility. In the framework of~\cite{BrandaoPlenio1, BrandaoPlenio2, Brandao-Gour} and of the present paper, the set of operations considered is that of ARNG operations. In this context, we say that a quantum resource theory is \deff{asymptotically reversible} (or simply \deff{reversible}) \deff{under ARNG} if
\bb
R^{\mathrm{ARNG}}(\rho\to \omega)\, R^{\mathrm{ARNG}}(\omega \to \rho) = 1\qquad \forall\ \rho,\omega\, .
\label{reversibility_resources}
\ee
where the two states $\rho,\omega$ pertain possibly to different systems. The seminal contribution of~\cite{BrandaoPlenio1, BrandaoPlenio2, Brandao-Gour} was to establish that \emph{any} reasonable resource theory, i.e.\ one for which Axioms~1--5 are satisfied and the regularised relative entropy $D^\infty_\MM$ is non-zero, is reversible. However, the results relied crucially on the finding of~\cite{Brandao2010}, where the composite hypothesis testing of quantum states was connected to the asymptotic quantity $D^\infty_\MM$. As we discuss in more detail shortly, without the latter result, the general reversibility of quantum resources is no longer known to be true.

In the case of entanglement theory, the above relation can be rephrased as the equality between distillable entanglement and the entanglement cost under ANE. That is, Eq.~\eqref{reversibility_resources} is equivalent to the statement that
\bb
E_D^{\mathrm{ANE}}(\rho) = E_C^{\mathrm{ANE}}(\rho)\qquad \forall\ \rho\, .
\ee

Here we only concern ourselves with reversibility under \emph{asymptotically} resource non-generating operations ARNG (or ANE for the case of entanglement), as was done in the frameworks of~\cite{BrandaoPlenio1,BrandaoPlenio2,Brandao-Gour}. In some settings, such as quantum thermodynamics~\cite{Brandao-thermo,Faist2019} or coherence~\cite{Chitambar-reversible}, it is sufficient to consider strictly resource--non-generating operations, i.e.\ $\rng_\delta$ with $\delta=0$, to achieve reversibility. However, recently it was shown that such a choice leads to the \emph{irreversibility} of entanglement theory~\cite{irreversibility}, in the sense that the distillable entanglement can be strictly smaller than the entanglement cost for some states. Therefore, the use of more permissive sets of operations such as ARNG appears unavoidable if we are to establish a form of reversibility applicable to general quantum resource theories, and in particular to entanglement theory.


\subsection{Relation between hypothesis testing and reversibility}\label{sec:hyptest_reversibility}

The task of quantum hypothesis testing is concerned with distinguishing between two hypotheses given by sequences of quantum states --- the null hypothesis $(\rho_n)_n$  and alternative hypothesis $(\sigma_n)_n$, with $\rho_n, \sigma_n \in \D(\HH^{\otimes n})$ --- by performing measurements on them. Given a two-outcome measurement $\{M_n, \id - M_n\}$, one defines two types of errors:
\begin{align}
\text{type I error } &\qquad \alpha(M_n) \coloneqq \Tr \rho_n (\id - M_n)\\
\text{type II error } &\qquad \beta(M_n) \coloneqq \Tr \sigma_n M_n
\end{align}
which correspond, respectively, to the probability that we accept the alternative hypothesis when the null hypothesis is actually true, and vice versa. In the context of asymmetric hypothesis testing, with which we will be concerned here, an important problem is to understand the trade-offs between the two types of errors. In particular: how small can the type II error be when the type I error is constrained to be at most $\epsilon \in [0,1]$? 
This question gives rise precisely to the hypothesis testing relative entropy $D^\epsilon_H$ that we have already encountered in~\eqref{eq:DH_def}:
\bb
- \log \min_{\substack{0 \leq M_n \leq \id\\\alpha(M_n) \leq \epsilon }} \,\beta(M_n) = 
-\log \min\left\{ \Tr M_n\sigma_n:\, 0\leq M_n\leq \id,\ \Tr M_n\rho_n\geq 1-\epsilon \right\} 
= D_H^\epsilon(\rho_n\|\sigma_n).
\ee
One of the most important results in the characterisation of quantum hypothesis testing is quantum Stein's lemma~\cite{Hiai1991,Ogawa2000}, which tells us that, for two i.i.d.\ hypotheses $\rho_n = \rho^{\otimes n}$ and $\sigma_n = \sigma^{\otimes n}$, the asymptotic error exponent equals the relative entropy between them. Specifically, for any $\epsilon \in (0,1)$ we have that
\begin{equation}\begin{aligned}
    \lim_{n\to\infty} \frac1n D^\epsilon_H\left(\rho^{\otimes n} \| \sigma^{\otimes n}\right) = D(\rho\|\sigma).
\end{aligned}\end{equation}

A number of works have been dedicated to extending and generalising quantum Stein's lemma. Importantly, in many practically relevant contexts, the alternative hypothesis is not simply a single state $\sigma_n$ --- instead, we are tasked with determining the least error when distinguishing $\rho_n$ from a whole set of quantum states. This is commonly known as composite hypothesis testing. A seminal result in this setting was the generalised quantum Stein's lemma of~\cite{Brandao2010}, which we recall as follows.

\begin{cj}[(Generalised quantum Stein's lemma)~\cite{Brandao2010}]\label{gen_steins}
For any family of sets of quantum states $(\MM_n)_n$ satisfying Axioms~1--5 in Section~\ref{general_resources_sec}, it holds that\footnote{The work~\cite{Brandao2010} actually claims a stronger (strong converse) result, namely that~\eqref{generalised_Stein_lemma} holds not only in the limit $\ve \to 0$, but for every $\ve \in (0,1)$, and the $\liminf$ is actually a limit. Here we will not need this stronger variant.}
\begin{equation}\begin{aligned}
    \lim_{\epsilon \to 0} \liminf_{n \to \infty} \frac1n \min_{\sigma_n \in \MM_n} D^\epsilon_H \left(\rho^{\otimes n} \| \sigma_n \right) = D^\infty_{\MM}(\rho).
    \label{generalised_Stein_lemma}
\end{aligned}\end{equation}
\end{cj}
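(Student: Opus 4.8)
The plan is to prove~\eqref{generalised_Stein_lemma} by establishing two matching inequalities for the quantity $R_n(\epsilon)\coloneqq\min_{\sigma_n\in\MM_n}D_H^\epsilon(\rho^{\otimes n}\|\sigma_n)$: an \emph{optimality} bound $\limsup_{n\to\infty}\tfrac1n R_n(\epsilon)\le D_\MM^\infty(\rho)$ valid for every fixed $\epsilon\in(0,1)$, and an \emph{achievability} bound $\liminf_{n\to\infty}\tfrac1n R_n(\epsilon)\ge D_\MM^\infty(\rho)-g(\epsilon)$ for some $g$ with $g(\epsilon)\to0$ as $\epsilon\to0$. Together these force $\lim_{n\to\infty}\tfrac1n R_n(\epsilon)$ to exist and to lie in $[D_\MM^\infty(\rho)-g(\epsilon),\,D_\MM^\infty(\rho)]$, and letting $\epsilon\to0$ yields the stated double limit.

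The optimality bound is the routine half. For fixed $m$ pick $\sigma_m^\star\in\MM_m$ attaining $D_\MM(\rho^{\otimes m})$ — a minimiser exists by compactness of $\MM_m$ (Axiom~1), the value is finite by Axiom~2, and hence $\supp\rho^{\otimes m}\subseteq\supp\sigma_m^\star$. By the tensor-closure Axiom~4 one has $(\sigma_m^\star)^{\otimes k}\in\MM_{mk}$, so $R_{mk}(\epsilon)\le D_H^\epsilon(\rho^{\otimes mk}\|(\sigma_m^\star)^{\otimes k})$, and ordinary quantum Stein's lemma applied to the i.i.d.\ pair $(\rho^{\otimes m},\sigma_m^\star)$ gives $\tfrac1{mk}D_H^\epsilon(\rho^{\otimes mk}\|(\sigma_m^\star)^{\otimes k})\to\tfrac1m D_\MM(\rho^{\otimes m})$. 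Data processing for $D_H^\epsilon$ together with the partial-trace Axiom~3 makes $n\mapsto R_n(\epsilon)$ nondecreasing, which lets one interpolate from $n=mk$ to arbitrary $n$ and conclude $\limsup_{n\to\infty}\tfrac1n R_n(\epsilon)\le\tfrac1m D_\MM(\rho^{\otimes m})$; taking $\inf_m$ and recalling $D_\MM^\infty(\rho)=\inf_m\tfrac1m D_\MM(\rho^{\otimes m})$ (Fekete, via Axiom~4) finishes this direction.

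The achievability bound is the substantive part, and I would attack it as follows. (i) \emph{Symmetrisation}: since $D_H^\epsilon(\rho^{\otimes n}\|\cdot)$ is convex — it is minus the logarithm of the optimal type-II error $\min\{\Tr M\sigma:0\le M\le\id,\ \Tr M\rho^{\otimes n}\ge1-\epsilon\}$, which is concave in $\sigma$ — and since $D_H^\epsilon$ is invariant under simultaneous conjugation of both arguments by a unitary, averaging the optimal $\sigma_n$ over $S_n$ (permitted by Axiom~5) cannot increase $D_H^\epsilon$, so we may take $\sigma_n$ permutation invariant. (ii) \emph{Blocking}: fix a large block length $\ell$, take $n=q\ell$, and view $\HH^{\otimes n}=(\HH^{\otimes\ell})^{\otimes q}$, so that $\rho^{\otimes n}=(\rho^{\otimes\ell})^{\otimes q}$ is i.i.d.\ at block level while $\sigma_n$ is invariant under permuting the $q$ blocks. (iii) \emph{de Finetti reduction}: approximate $\sigma_n$ by a mixture $\int d\nu(\omega)\,\omega^{\otimes q}$ of i.i.d.\ block states via a quantum de Finetti theorem, and use that iterated partial traces of $\sigma_n$ lie in $\MM_{j\ell}$ (Axiom~3), hence $\int d\nu(\omega)\,\omega^{\otimes j}\in\MM_{j\ell}$ for all $j\le q$, to argue that $\nu$ is essentially supported near $\MM_\ell$. (iv) \emph{Composite i.i.d.\ Stein}: invoke the composite quantum Stein's lemma for i.i.d.\ alternative hypotheses ranging over the compact convex set $\MM_\ell$, whose error exponent is $\min_{\omega\in\MM_\ell}D(\rho^{\otimes\ell}\|\omega)=D_\MM(\rho^{\otimes\ell})$, to produce a test certifying $\liminf_{n\to\infty}\tfrac1n R_n(\epsilon)\ge\tfrac1\ell D_\MM(\rho^{\otimes\ell})-g(\epsilon)$. (v) Let $\ell\to\infty$, so that $\tfrac1\ell D_\MM(\rho^{\otimes\ell})\to D_\MM^\infty(\rho)$.

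The main obstacle is step (iii). The \emph{finite} quantum de Finetti theorem controls the approximation error only polynomially in $q$, which is far too weak against the exponentially small type-II errors in play, so one is forced to use an exponential-type de Finetti or post-selection bound; its output, however, is not a clean mixture of i.i.d.\ states but a mixture of ``almost-power'' states confined to typical subspaces, and reconciling this with the requirement that the de Finetti measure sit near the free set $\MM_\ell$ — a statement that itself rests on a delicate combination of Axioms~3 and~5 and arguably only survives after restricting to near-optimal $\sigma_n$, for which the optimality bound already pins down the measure — is where the argument is genuinely hard and where a subtle gap is most likely to hide. A secondary, essentially bookkeeping, difficulty is to collect the $\epsilon$-dependent corrections (from the one-shot-to-i.i.d.\ passage in step (iv) and from the de Finetti smoothing in step (iii)) into a single $g(\epsilon)\to0$, with care taken that the limits are performed in the order $n\to\infty$, then $\ell\to\infty$, then $\epsilon\to0$.
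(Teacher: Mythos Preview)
The statement you are attempting to prove is labelled as a \emph{Conjecture} in the paper precisely because the paper's central message is that no valid proof of it is currently known. Your optimality direction is fine and matches what the paper calls the ``elementary'' half, which is indeed established in~\cite{Brandao2010}. The issue is entirely with the achievability direction.

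Your proposed achievability strategy --- symmetrise, block, apply an exponential de~Finetti/almost-power-state reduction, then reduce to a composite i.i.d.\ Stein's lemma over $\MM_\ell$ --- is essentially the strategy of Brand\~{a}o and Plenio~\cite{Brandao2010}, and you have correctly put your finger on step~(iii) as the place ``where a subtle gap is most likely to hide.'' The paper's point is that this is not merely a risk: a gap \emph{does} hide there, and it has not been closed. Concretely, after passing to almost-power states one must control a second-derivative term of the form
\[
\frac{1}{n}\left(\sum_j t_{j,n}(\log\mu_{j,n})^2 - \Big(\sum_j t_{j,n}\log\mu_{j,n}\Big)^{\!2}\right),
\]
and~\cite[Lemma~III.9]{Brandao2010} claims this is $o(1)$ (or at least $\leq 1$) for large $n$ under the relevant constraints. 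The present paper exhibits explicit i.i.d.\ counterexamples (take $T=P=Q^{\otimes(n-m-r)}$ with $V(Q)>1$) showing that this varentropy-like quantity is in fact $\Theta(1)$, so the bound is false and the chain of implications leading to the achievability part of~\cite[Proposition~III.1]{Brandao2010} is broken. Your proposal does not supply any new mechanism to control this term; it simply re-traverses the same route.

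In short: you have accurately reconstructed the architecture of the original argument and even localised its weak point, but you have not repaired it. As the paper discusses, partial substitutes exist (pinching-based arguments when the minimiser can be taken to have polynomially bounded spectrum, or compatibility-based Stein's lemmas as in~\cite{brandao_adversarial,berta_composite}), but none of them recovers the full statement for arbitrary $(\MM_n)_n$ satisfying Axioms~1--5. Until step~(iii) is genuinely fixed, what you have is a proof outline for an open problem, not a proof.
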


The fact that the left-hand side of~\eqref{generalised_Stein_lemma} can never exceed the right-hand side is elementary, and proved in full detail in~\cite{Brandao2010}. The converse direction is the non-trivial one.

To understand how the above result connects with asymptotic transformations of resources, it is useful to look at the case of quantum entanglement. Here, the entanglement cost and distillable entanglement under asymptotically non-entangling maps were found~\cite{BrandaoPlenio2} to correspond to two smoothed and regularised quantities: one based on the max-relative entropy, the other on the hypothesis testing relative entropy. Specifically,
\begin{equation}\begin{aligned}\label{eq:ANE_EC_ED}
    E_C^{\mathrm{ANE}}(\rho) &= \lim_{\epsilon \to 0} \limsup_{n \to \infty} \frac1n \min_{\sigma_n \in \SEP} D^\epsilon_{\max} \left(\rho^{\otimes n} \| \sigma_n\right)\\
    E_D^{\mathrm{ANE}}(\rho) &= \lim_{\epsilon \to 0} \liminf_{n \to \infty} \frac1n \min_{\sigma_n \in \SEP} D^\epsilon_{H} \left(\rho^{\otimes n} \| \sigma_n\right),
\end{aligned}\end{equation}
where $D^\epsilon_{\max} (\rho^{\otimes n} \| \sigma_n) = \min \left\{ D_{\max}(\rho'_n\|\sigma_n) \;:\; \rho'_n \in \D(\HH^{\otimes n}),\; \Delta(\rho'_n, \rho^{\otimes n})  \leq \epsilon \right\}$ and $\Delta$ is some suitable measure of distance between states, e.g.\ the trace distance $\frac12 \|\rho'_n-\rho^{\otimes n}\textbf{}\|_1$. In order to show the reversibility of entanglement, \cite{BrandaoPlenio2} then claims to prove that $E_C^{\mathrm{ANE}}(\rho) = D_\SEP^\infty(\rho) = E_D^{\mathrm{ANE}}(\rho) $. The first of these equalities holds irrespectively of the quantum Stein's lemma~\cite{BrandaoPlenio2,datta_2009-2}; however, the second one is precisely the statement of Conjecture~\ref{gen_steins}, and cannot be true without it.

More generally, the relation $E_C^{\mathrm{ANE}}(\rho) = D_\SEP^\infty(\rho) = E_D^{\mathrm{ANE}}(\rho) $ would imply that the rate of conversion between any two states $\rho, \omega \notin \SEP$ is given by
\bb
R^{\mathrm{ANE}}(\rho\to \omega) = \frac{D^\infty_\SEP(\rho)}{D^\infty_\SEP(\omega)}.
\label{rate_ratio_D_S}
\ee
Ref.~\cite{Brandao-Gour} extended this result to the case of general resource theories satisfying Axioms~1--5 of Section~\ref{general_resources_sec} --- where, as previously mentioned, there might not be a clear notion of distillable resource and resource cost. However, once again, they made an explicit use of Conjecture~\ref{gen_steins} to establish this result, and the main claim does not hold without it. We therefore see that the generalised quantum Stein's lemma underlies some of the most fundamental results in the reversibility of quantum resource theories.


\section{A gap in the proof of~\texorpdfstring{Lemma~III.9 of~\cite{Brandao2010}}{Lemma III.9 of [Brand\~ao and Plenio, CMP 2010]}}\label{sec:gap}

\subsection{Main argument}

In the proof of Lemma III.9 of~\cite{Brandao2010}, it is argued that the function (see Eq.(145) in~\cite{Brandao2010})
\bb
g(\mu_{1,n},\mu_{2,n},...)\eqdef\frac1n\left(\sum_{j}t_{j,n}(\log\mu_{j,n})^2-\left(\sum_jt_{j,n}\log\mu_{j,n}\right)^2\right) , \label{eq:wrong1}
\ee
when maximised over all distributions that satisfy $\mu_{j,n} \geq t_n := \frac\tau{1+\tau}\lambda_{\min}(\sigma)^{n-m-r}$ for all $j$ in some (finite) set $\mX^{n-m-r}$, achieves its maximum when all the $\mu_{j,n}$ except one are equal to $t$. Here $\tau > 0$ is some small constant and $n \geq m+r$, where $m$ and $r$ are natural numbers assumed to satisfy $m,r = o(n)$. The authors eventually use the above statement to conclude in Eq.~(155) in~\cite{Brandao2010} that
\begin{align}
    g(\mu_{1,n},\mu_{2,n},...)
    &\leq 1 \label{eq:wrong2}
\end{align}
for sufficiently large $n$ as long as the sequence of probability distributions $\{\mu_{j,n}\}_n$ have their minimal probability bounded from below by $t_n$.
Going beyond what is needed for their further proof, their argument in fact claims to show that $g(\mu_{1,n},\mu_{2,n},...)$ will become arbitrary small for sufficiently large $n$, i.e., $g(\mu_{1,n},\mu_{2,n},...) = o(1)$.
We argue here that the assertion in Eq.~\eqref{eq:wrong2} is incorrect and provide a counterexample. We believe that this error can be traced back to Eq.~(147) of~\cite{Brandao2010}, where a Lagrange multiplier is missing since the normalization condition $\sum_{j \in\mX^{n-m-r}} \mu_{j,n} = 1$ must be enforced.

\newcommand{\MT}[1]{\textcolor{red}{#1}}

For the purpose of illustrating the counterexample, let us simplify the notation a bit and consider the varentropy-like function
\bb
V_T(P)&\eqdef\sum_{x\in\mX} t_x(\log p_x)^2-\left(\sum_{x\in\mX} t_x\log p_x\right)^2 \;,
\ee
where $P=\{p_1,...,p_{|\mX|}\}$ is a probability vector and $T = \{t_1,...,t_{|\mX|}\}$ where $t_x \geq 0$ are arbitrary. In the special case where $T = P$ we recover the usual varentropy function 
$V(P) \eqdef\sum_{x\in\mX} p_x(\log p_x)^2-\left(\sum_{x\in\mX} p_x\log p_x\right)^2$, the variance of the log-likelihood of $P$. The claim in question can now be recast as the statement that $\frac{1}{n} V_T(P) \leq 1$ for sufficiently large $n$ and for all distributions $P$ that have their minimal eigenvalue bounded from below by $t_n$. But this is not correct, since by choosing $T = P = Q^{n-m-r}$ both i.i.d.\ according to some distribution $Q$ with minimal probability $q_{\min} = \lambda_{\min}(\sigma)$, we clearly have $V_T(P) = (n-o(n))\, V(Q)$, growing linearly with $n$ and violating the claim if furthermore $V(Q) > 1$. It is known that a discrete probability distribution $Q$ with $V(Q)>1$ exists provided that the underlying alphabet has size at least $4$~\cite[Section~2.B.1]{Reeb2015}.\footnote{To see this, use Theorem~8 of~\cite{Reeb2015} and plug $d=4$ and $r=1/8$ in Eq.~(7) of~\cite{Reeb2015}.}

This shows that the argument used to prove~\cite[Lemma III.9]{Brandao2010} is incorrect. Hence, we conclude that there is currently no proof for the middle equality in
\bb
\lim_{\epsilon \to 0} \limc{\liminf_{n \to \infty}} \frac1n \min_{\sigma_n \in \MM} D^\epsilon_{H} \left(\rho^{\otimes n} \| \sigma_n\right)\overset{?}{=}D_\MM^\infty(\rho) = \lim_{n\to\infty} \frac1n\, D_ \MM\left( \rho^{\otimes n} \right)
\ee
for general sets $\MM$, and in particular for the case of entanglement theory, where $\MM$ corresponds to the set of separable states $\SEP$.
It therefore remains an open question whether
\bb\label{eq:conjecture}
E_D^{\mathrm{ANE}}(\rho) \overset{?}{=} D_\SEP^\infty(\rho).
\ee
We remark that the equality is known to hold true for all maximally correlated states of the form $\sum_{i,j} \rho_{ij} \ketbraa{ii}{jj}$ --- indeed, even one-way LOCC operations are enough to distill from such states at a rate given by $D_\SEP^\infty$~\cite{devetak2005}. Eq.~\eqref{eq:conjecture} has recently been shown to hold true also for the antisymmetric state $\alpha_d$~\cite{dne}. At this stage we are not aware of any state that could be a potential counterexample.


\subsection{Additional comments}

One might wonder why we used an i.i.d.\ distribution to disprove \cite[Lemma III.9]{Brandao2010}, given that it is not clear that the actual probability distribution for which~\eqref{eq:wrong2} is intended to hold may be i.i.d.; however, we will now argue that the i.i.d.\ structure can indeed be seen as a special case of the claims in \cite{Brandao2010}.

For some arbitrary $\rho,\sigma\in\D(\HH)$, with $\sigma>0$, consider the i.i.d.\ sets $\MM_n \coloneqq \{\sigma^{\otimes n}\}$, which satisfy Axioms~1--5 in Section~\ref{general_resources_sec}. In the rest of this section, we refer to proofs and equations in~\cite{Brandao2010} unless otherwise specified. In the direct part of the proof of Proposition~III.1, set $y = D_{\max}(\rho\|\sigma)$. Since Eq.~(89) is satisfied (as the sequence on the l.h.s.\ is identically $0$ for all $n$), the proof should work in this case, too. One can verify that under these assumptions the state $\rho_n$ constructed in Eq.~(90) using Lemma~C.5 is simply $\rho_n = \rho^{\otimes n}$. Also, $\omega_n$ defined by Eq.~(122) is simply $\omega_n = \sigma^{\otimes (n-m-r)}$. 

The last ingredient we need is the state $\pi_n$, which Lemma~III.9 declares to be defined by Eq.~(115). To construct it, we need $\ket{\Phi_{n,m,r}}$ given in Eq.~(114) --- equivalently, in Eq.~(103).
Now, Lemma~III.7 states that $\ket{\Psi_{n,m,r}}$ can be any almost power state along $\ket{\theta}$, which is a purification of $\rho$. We are going to make the obvious choice $\ket{\Psi_{n,m,r}} = \ket{\theta}^{\otimes (n-m-r)}$. Then $\beta_0 = 1$ and $\beta_1 = \ldots = \beta_r = 0$. Thus, $\ket{\Phi_{n,m,r}}$ given in Eqs.~(103) or~(114) is actually equal to $\ket{\Psi_{n,m,r}}$, i.e.
\bb
\ket{\Phi_{n,m,r}} = \ket{\Psi_{n,m,r}} = \ket{\theta}^{\otimes(n-m-r)}\, .
\ee
Going back to Eq.~(115), one then sees that $\pi_n = \rho^{\otimes(n-m-r)}$. This means that Eq.~(137) is basically the varentropy function.

More precisely, we can now pick $\rho = \sum_x \lambda_x \ketbra{x}$ and $\sigma = \sum_x q_x \ketbra{x}$ to be diagonal. If moreover $\lambda_x = \frac{q_x^{1+s}}{\sum_{x'} q_{x'}^{1+s}}$, then Eq.~(137) yields immediately
\bb
\left|f''_n(s)\right| = \frac{n-m-r}{n} \left(\sum_x q_x (\log q_x)^2 - \left( \sum_x q_x \log q_x \right)^2 \right) = \frac{n-m-r}{n} V(Q) = V(Q) - \frac{o(n)}{n}\, .
\ee
Therefore, the above i.i.d.\ counterexample can indeed arise in the setting of the proof of Proposition~III.1 of~\cite{Brandao2010}. We thus conclude once more that the argument in Lemma III.9 of~\cite{Brandao2010} between Eq.~(145) and~(155) is erroneous.


\section{Alternative quantum Stein's lemmas}\label{sec:alternative-stein}

\subsection{Connection to resource theories}

As discussed in Section~\ref{sec:notation}, a major undesirable consequence of the gap in the proof of the generalised quantum Stein's lemma is that the reversible framework for quantum resources developed in~\cite{BrandaoPlenio1,BrandaoPlenio2,Brandao-Gour} breaks down. However, that is not to say that the methods used in these works are incorrect --- provided that a corresponding composite Stein's lemma can be established for the given resource, the reversibility results can be recovered in the same way. For clarity, let us restate the main result that remains true.

\begin{thm}[\cite{Brandao-Gour}] \label{Brandao_Gour_thm}
Consider any resource theory described by a family of sets of quantum states $(\MM_n)_n$ such that Axioms~1--5 as stated in Section~\ref{general_resources_sec} are satisfied and such that
\begin{equation}\begin{aligned}
    \lim_{\epsilon \to 0} \limc{\liminf_{n \to \infty}} \frac1n \min_{\sigma_n \in \MM_n} D^\epsilon_H \left(\rho^{\otimes n} \| \sigma_n \right) = D^\infty_{\MM}(\rho).
\end{aligned}\end{equation}
Then, for all states $\rho, \omega$ such that $D^\infty_\MM(\rho), D^\infty_\MM(\omega) > 0$, it holds that
\begin{equation}\begin{aligned}
    R^{\mathrm{ARNG}}(\rho\to \omega) = \frac{D^\infty_\MM(\rho)}{D^\infty_\MM(\omega)}.
\end{aligned}\end{equation}
\end{thm}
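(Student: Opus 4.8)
The plan is to bracket $R^{\mathrm{ARNG}}(\rho\to\omega)$ between a converse bound and an achievability bound, both equal to $D^\infty_\MM(\rho)/D^\infty_\MM(\omega)$. The converse rests only on the asymptotic monotonicity of the regularised relative entropy of resource under $\delta$-resource-non-generating maps with $\delta\to0$, and does not invoke the hypothesis; the achievability chains a one-shot distillation step with a one-shot dilution step, and it is here that the assumed Stein-type identity is used, together with the asymptotic equipartition property of the max-relative entropy of a resource~\cite{BrandaoPlenio2,datta_2009-2}, which holds unconditionally under Axioms~1--5.

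\emph{Converse.} Let $(k_n)_n$, $(\delta_n)_n$ with $\delta_n\to0$ and $\delta_n$-resource-non-generating maps $\Lambda_n$ satisfy $\eta_n\coloneqq\frac12\|\Lambda_n(\rho^{\otimes n})-\omega^{\otimes k_n}\|_1\to0$; a routine argument then bounds $k_n/n$, and we pass to a subsequence on which it converges to the rate $R$. The key fact is that $D_\MM(\Lambda(\eta))\le D_\MM(\eta)+\log(1+\delta)$ for every $\delta$-resource-non-generating $\Lambda$ and every state $\eta$: picking $\sigma\in\MM$ optimal for $D_\MM(\eta)$, the condition $R_\MM(\Lambda(\sigma))\le\delta$ means $\Lambda(\sigma)\le(1+\delta)\tau$ for some free $\tau$ --- and this is exactly where using the \emph{generalised} robustness in the definition of $\rng_\delta$ is essential --- so that $D_\MM(\Lambda(\eta))\le D(\Lambda(\eta)\|\tau)\le D(\Lambda(\eta)\|\Lambda(\sigma))+\log(1+\delta)\le D_\MM(\eta)+\log(1+\delta)$ by the data-processing inequality and the operator monotonicity of the logarithm. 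Applying this with $\eta=\rho^{\otimes n}$ and then using the asymptotic continuity of $D_\MM$ (standard, as each $\MM_n$ contains a full-rank state by Axiom~2) to replace $\Lambda_n(\rho^{\otimes n})$ by $\omega^{\otimes k_n}$ up to an $o(n)$ correction, we get
\begin{equation}
D_\MM\!\left(\omega^{\otimes k_n}\right)\ \le\ D_\MM\!\left(\rho^{\otimes n}\right)+\log(1+\delta_n)+o(n)\,.
\end{equation}
Dividing by $n$ and letting $n\to\infty$ (so $k_n\to\infty$ whenever $R>0$) gives $R\,D^\infty_\MM(\omega)\le D^\infty_\MM(\rho)$, hence $R\le D^\infty_\MM(\rho)/D^\infty_\MM(\omega)$ as $D^\infty_\MM(\omega)>0$; taking the supremum over protocols, $R^{\mathrm{ARNG}}(\rho\to\omega)\le D^\infty_\MM(\rho)/D^\infty_\MM(\omega)$.

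\emph{Achievability.} It suffices to exhibit, for every $R<D^\infty_\MM(\rho)/D^\infty_\MM(\omega)$ (a nonempty range since $D^\infty_\MM(\rho)>0$), $\delta_n$-resource-non-generating maps with $\delta_n\to0$ taking $\rho^{\otimes n}$ to a state vanishingly close to $\omega^{\otimes k_n}$ with $k_n/n\to R$. I would route this through an intermediate state $\xi_n$ of controlled max-relative-entropy content. In the \emph{distillation} step, a one-shot achievability lemma of the type underlying~\cite{BrandaoPlenio2,Brandao-Gour} yields a $\delta_n$-resource-non-generating map sending $\rho^{\otimes n}$ to a state $\epsilon$-close to some $\xi_n$ with $\LR_{\MM_n}(\xi_n)\ge\min_{\sigma_n\in\MM_n}D^\epsilon_H(\rho^{\otimes n}\|\sigma_n)-o(n)$; by the hypothesis this is $\gtrsim n\,D^\infty_\MM(\rho)$ once $\epsilon\to0$. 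In the \emph{dilution} step, a dual one-shot lemma converts, by a further $\delta_n$-resource-non-generating map, any state whose max-relative-entropy content exceeds $\min_{\sigma_n\in\MM_{k_n}}D^\epsilon_{\max}(\omega^{\otimes k_n}\|\sigma_n)+o(n)$ into a state $\epsilon$-close to $\omega^{\otimes k_n}$; by the asymptotic equipartition property this threshold is $\approx k_n\,D^\infty_\MM(\omega)$. The two maps compose whenever $n\,D^\infty_\MM(\rho)\gtrsim k_n\,D^\infty_\MM(\omega)$, i.e.\ for any $k_n/n\to R<D^\infty_\MM(\rho)/D^\infty_\MM(\omega)$; choosing $\epsilon_n\to0$ slowly (extracted from the iterated limits by a diagonal argument) and $\delta_n\to0$ slowly enough not to spoil the one-shot bounds yields $R^{\mathrm{ARNG}}(\rho\to\omega)\ge D^\infty_\MM(\rho)/D^\infty_\MM(\omega)$, which with the converse proves the theorem. (Equivalently one may route through powers of any fixed resourceful reference state, using that conversion rates compose supermultiplicatively and the reference's regularised relative entropy cancels from the final rate; in entanglement theory one takes $\xi_n=\Phi_2^{\otimes m}$.)

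\emph{The main obstacle} is the achievability direction and, within it, the two one-shot conversion lemmas: turning $\min_{\sigma_n\in\MM_n}D^\epsilon_H(\rho^{\otimes n}\|\sigma_n)$ and $\min_{\sigma_n\in\MM_{k_n}}D^\epsilon_{\max}(\omega^{\otimes k_n}\|\sigma_n)$ into genuine transformations implemented by resource-non-generating channels whose generated-resource parameter $\delta_n$ really tends to $0$, and interleaving the three limits in the right order ($n\to\infty$ first, $\epsilon\to0$ last, $\delta_n\to0$ compatibly) so as to match the quantifier structure of the assumed identity. The robustness-based definition of $\rng_\delta$ is exactly what makes the ``almost free $\Rightarrow$ free, at the cost of $\log(1+\delta_n)$'' step valid in the converse and what allows the dilution step to spend a controlled amount of generated resource in the achievability; with the standard robustness in place of the generalised one the argument collapses, consistent with the known irreversibility in that case~\cite{irreversibility}.
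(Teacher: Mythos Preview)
The paper does not give its own proof of this theorem; it is stated as a restatement of the result of~\cite{Brandao-Gour}, and the only direction the paper actually reproves is the converse (separately, as Lemma~6). Your converse argument matches the paper's Lemma~6 essentially verbatim --- the paper writes the monotonicity correction as $-\log(1-\delta)$ (citing~\cite[Lemma~IV.5]{BrandaoPlenio2}) rather than your $+\log(1+\delta)$, but both are valid and vanish as $\delta_n\to0$. Your achievability sketch --- chaining a one-shot distillation step (rate governed by $\min_{\sigma_n}D^\epsilon_H$, hence by $D^\infty_\MM(\rho)$ via the assumed Stein-type identity) with a one-shot dilution step (rate governed by $\min_{\sigma_n}D^\epsilon_{\max}$, hence by $D^\infty_\MM(\omega)$ via the unconditional asymptotic equipartition property~\cite{BrandaoPlenio2,datta_2009-2}) --- is precisely the strategy of~\cite{BrandaoPlenio2,Brandao-Gour} that the paper cites, and is correct as an outline. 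The paper does not reproduce this part of the argument, so there is nothing further to compare against.
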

The problem then becomes: in what settings can a composite quantum Stein's lemma be established, allowing us to fully recover the reversibility of a given theory? There are various results and techniques in the literature that prove composite versions of Stein's lemma 
~\cite{brandao_adversarial, hayashitomamichel16, Tomamichel2018, berta_composite, Mosonyi2020}. These are not affected by the flaw in the argument from the generalised quantum Stein’s lemma in \cite{Brandao2010} and can therefore be used to recover composite Stein's lemmas for conditions related to the framework laid out in Section~\ref{general_resources_sec}. Unfortunately, in general none of these results covers the exact setting from \cite{Brandao2010} --- cf.~the discussion in \cite[Section~3.1.5]{brandao_adversarial}
--- but in the following we describe some results that are attainable.


\subsection{Coherence}

The resource theory of quantum coherence~\cite{Baumgraz2014,coherence-review} is concerned with the setting where the free states are those that are diagonal in a fixed orthonormal basis $\{\ket{i}\}$:
\begin{equation}\begin{aligned}
    \II_n \coloneqq \co \left\{ \bigotimes_{j=1}^n \ketbra{i_j} \right\}.
\end{aligned}\end{equation}
Section~3.1 in~\cite{berta_composite}, 
and in particular 
Eq.~(67) therein, shows that the asymptotic error exponent of testing the fixed state $\rho^{\otimes n}$ against the sets $\II_n$ is given by the relative entropy of coherence~\cite{Baumgraz2014}; specifically,
\begin{align}
\lim_{\epsilon \to 0} \lim_{n \to \infty} \frac1n \min_{\sigma_n \in \II_n} D^\epsilon_H \left(\rho^{\otimes n} \| \sigma_n \right) = D_{\II}(\rho),
\end{align}
where
\begin{align}
D_{\II}(\rho)=\inf_{\sigma\in \II}D(\rho\|\sigma) = D(\rho \| \Delta(\rho))
\end{align}
and $\Delta(\rho) = \sum_i \ketbra{i}\rho\ketbra{i}$. Notice that the quantity $D_\II(\rho)$ is additive, meaning that no regularisation is needed.

This fully recovers the result from~\cite{Brandao2010} for the special case of the resource theory of coherence, meaning that Theorem~\ref{Brandao_Gour_thm} can be applied and coherence theory can be shown to be reversible under the transformations ARNG. Note, however, that an even stronger result is known about this particular resource theory: quantum coherence has been shown to be reversible under the so-called dephasing-covariant operations~\cite{Chitambar-reversible}, which are a strict subset of ARNG (and even of the strictly resource non-generating operations); the proof of this fact in~\cite{Chitambar-reversible} is independent from~\cite{Brandao2010}.


\subsection{`Pseudo-entanglement'}\label{pseudo_entanglement_sec}

A type of composite quantum Stein's lemma can also be obtained for a modified version of entanglement theory. In this alternative framework, we consider bipartite systems of the form $A^nB^n$, as in standard entanglement theory, but the corresponding free states are assumed to be not only separable across the cut $A^n:B^n$, but in fact fully separable across the partition $A_1:\ldots:A_n:B_1:\ldots:B_n$. In other words, the sets of free states $\MM_n \coloneqq \Pseudo_n\subseteq \D(\HH_{AB}^{\otimes n})$ are given by
\bb
\Pseudo_n \coloneqq&\, \co\left\{ \bigotimes_{j=1}^n \sigma_{A_jB_j}^{(j)} :\, \sigma^{(j)}_{A_jB_j}\in \SEP_{A_jB_j}\ \forall\, j \right\} \\
=&\ \co\left\{ \bigotimes_{j=1}^n \psi_j^{A_j}\! \otimes \phi_j^{B_j}\!:\, \ket{\psi_j}_{A_j}\!\!\in \HH_{A_j},\, \ket{\phi_j}_{B_j}\!\!\in \HH_{B_j},\, \braket{\psi_j|\psi_j} = 1 = \braket{\phi_j|\phi_j}\ \forall\, j \right\} ,
\label{pseudo_SEP}
\ee
where $\SEP_{A_jB_j}$ is defined by~\eqref{SEP}, $\psi_j^{A_j}\coloneqq \ketbra{\psi_j}_{A_j}$, and analogously for $\phi_j^{B_j}$. Although these sets satisfy Axioms~1--5 in Section~\ref{general_resources_sec}, they are in general smaller than the sets of separable states $\SEP_{A^nB^n}$. Indeed, in~\eqref{pseudo_SEP} we require no entanglement to exist not only between Alice and Bob, but also among different Alices and different Bobs.

The reason we consider the above sets of free states is that they have the following remarkable property: for all positive integers $n,k$, 
\bb
\Tr_{A^k B^k} \left[ \id_{A^nB^n}\otimes E_{A^kB^k}\, \rho_{A^{n+k}B^{n+k}} \right]\, \in\, \R_+\!\cdot\!\Pseudo_n\quad \forall\ \rho_{A^{n+k}B^{n+k}}\in \Pseudo_{n+k}\, ,\ \ \forall\ E_{A^kB^k}\geq 0\, , 
\label{compatibility}
\ee
where $\R_+\cdot \Pseudo_n \coloneqq \left\{ \lambda \sigma:\, \lambda\geq 0,\ \sigma\in \Pseudo_n\right\}$ is the cone generated by $\Pseudo_n$. In the language of~\cite[Definition~4]{brandao_adversarial}, this means that $(\Pseudo_n)_n$ and the set of all measurements, hereafter denoted by $\mathds{M}$, form a \emph{compatible pair}. 
Therefore, we can apply immediately the theory developed in~\cite[Section~3]{brandao_adversarial}; combining it with insights from~\cite{Hayashi2002, berta_composite}, we are able to obtain the following statement.

\begin{prop} \label{k=1_prop}
For a bipartite quantum system $AB$, let $\Pseudo = (\Pseudo_n)_n$ denote the family of multi-partite states defined by~\eqref{pseudo_SEP}. Then the composite Stein's lemma
\bb
\lim_{\epsilon \to 0} \limc{\liminf_{n \to \infty}} \frac1n \min_{\sigma_n \in\Pseudo_n} D^\epsilon_H \left(\rho^{\otimes n} \| \sigma_n \right) = D_\Pseudo^\infty(\rho)
\label{generalised_Stein_lemma_pseudo_entanglement}
\ee
holds for all states $\rho=\rho_{AB}$. Therefore, by Theorem~\ref{Brandao_Gour_thm} we have that $R^{\mathrm{ARNG}}(\rho\to \omega) = \frac{D^\infty_\Pseudo(\rho)}{D^\infty_\Pseudo(\omega)}$ for all states $\rho, \omega$ such that $D^\infty_\Pseudo(\rho), D^\infty_\Pseudo(\omega) > 0$.
\end{prop}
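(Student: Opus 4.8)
The plan is to verify that the hypotheses of the ``restricted-measurement'' or ``adversarial'' composite Stein's lemma of~\cite{brandao_adversarial} are met, and then to combine this with a regularisation argument borrowed from~\cite{Hayashi2002, berta_composite} to collapse the formula to the regularised relative entropy. First, I would establish the compatibility relation~\eqref{compatibility}: given $\rho_{A^{n+k}B^{n+k}}\in\MM_{n+k}$, write it (by definition~\eqref{pseudo_SEP}) as a convex combination of pure product states $\bigotimes_{j=1}^{n+k}\ketbra{\psi_j}_{A_j}\otimes\ketbra{\phi_j}_{B_j}$; applying $\Tr_{A^kB^k}[\id\otimes E]$ to each such term factorises as a nonnegative scalar $\Tr\!\big[E\,\bigotimes_{j=n+1}^{n+k}\ketbra{\psi_j}\otimes\ketbra{\phi_j}\big]$ times a state of the same product form on $A^nB^n$. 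Hence the output is a nonnegative combination of states in $\MM_n$, i.e.\ an element of $\R_+\!\cdot\!\MM_n$, which is what~\eqref{compatibility} asserts. By~\cite[Definition~4]{brandao_adversarial} this makes $(\MM_n)_n$ together with the full measurement set $\mathds{M}$ a compatible pair, so the main theorem of~\cite[Section~3]{brandao_adversarial} applies and yields
\begin{equation}\begin{aligned}
\lim_{\epsilon\to 0}\lim_{n\to\infty}\frac1n\min_{\sigma_n\in\MM_n}D_H^\epsilon\!\left(\rho^{\otimes n}\|\sigma_n\right)=\lim_{n\to\infty}\frac1n\min_{\sigma_n\in\MM_n}D^{\mathds{M}}\!\left(\rho^{\otimes n}\|\sigma_n\right).
\end{aligned}\end{equation}

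The remaining task is to identify the right-hand side with $D_\MM^\infty(\rho)$ as defined in~\eqref{regularised_relative_entropy_resource}, i.e.\ with $\lim_n\frac1n\min_{\sigma_n\in\MM_n}D(\rho^{\otimes n}\|\sigma_n)$ where $D$ is the Umegaki relative entropy. One inequality is immediate from the data-processing bound $D^{\mathds{M}}\leq D$. For the reverse inequality I would invoke the asymptotic argument of~\cite{Hayashi2002} (as adapted in~\cite{berta_composite}): because $\MM_n$ is permutation-invariant (Axiom~5) and closed under tensor products (Axiom~4), the optimal $\sigma_n$ in the measured relative entropy can be taken permutation-symmetric, and by symmetry one can approximate it arbitrarily well — after pinching onto a polynomially-sized set of type-class blocks — by a state that is essentially classical on each block, on which the measured and Umegaki relative entropies coincide up to an $O(\log n / n)$ correction. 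Regularising, this shows $\lim_n\frac1n\min_{\sigma_n}D^{\mathds{M}}(\rho^{\otimes n}\|\sigma_n)\geq \lim_n\frac1n\min_{\sigma_n}D(\rho^{\otimes n}\|\sigma_n)=D_\MM^\infty(\rho)$, completing the chain of equalities~\eqref{generalised_Stein_lemma_pseudo_entanglement}. The final sentence of the proposition is then an immediate application of Theorem~\ref{Brandao_Gour_thm}, whose hypothesis is precisely~\eqref{generalised_Stein_lemma_pseudo_entanglement} together with Axioms~1--5.

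I expect the main obstacle to be the second step — upgrading the \emph{measured} relative entropy rate produced by the adversarial Stein's lemma to the \emph{Umegaki} relative entropy rate $D_\MM^\infty$. The adversarial framework of~\cite{brandao_adversarial} naturally outputs a measured quantity, and in general $D^{\mathds{M}}<D$ for non-commuting states; it is only after regularisation, and crucially by exploiting the multi-partite product structure of~\eqref{pseudo_SEP} together with permutation symmetry, that the two regularised rates coincide. Getting the symmetrisation-plus-pinching estimate right — in particular controlling the error terms uniformly and ensuring the pinched approximant still lies in (a scalar multiple of) $\MM_n$ — is the delicate part; everything else is either a direct citation or the routine verification~\eqref{compatibility} sketched above. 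It is worth noting that this is exactly the place where the original argument of~\cite{Brandao2010} stumbled, so care is needed, but the structural restriction built into~\eqref{pseudo_SEP} (full separability across all $2n$ parties) is what makes the compatibility relation~\eqref{compatibility} — and hence this cleaner route — available here, unlike in the general entanglement setting.
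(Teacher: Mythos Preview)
Your proposal is correct and follows essentially the same route as the paper's proof: verify the compatibility relation~\eqref{compatibility}, invoke the adversarial Stein's lemma of~\cite{brandao_adversarial} to reduce to the regularised \emph{measured} relative entropy, and then use permutation symmetry together with Hayashi's pinching inequality~\cite{Hayashi2002, berta_composite} to upgrade it to $D_\MM^\infty$. One small clarification regarding your anticipated obstacle: in the paper the pinching is applied to $\rho^{\otimes n}$ (with respect to the spectral decomposition of the permutation-invariant minimiser $\sigma_n$, which has only polynomially many distinct eigenvalues), not to $\sigma_n$, so there is no need to ensure any pinched state remains in $\MM_n$ --- the chain is simply $D^{\mathds{ALL}}(\rho^{\otimes n}\|\sigma_n) \geq D^{\mathds{ALL}}\!\big(\pazocal{P}_{\sigma_n}(\rho^{\otimes n})\,\big\|\,\sigma_n\big) = D\!\big(\pazocal{P}_{\sigma_n}(\rho^{\otimes n})\,\big\|\,\sigma_n\big) \geq D(\rho^{\otimes n}\|\sigma_n) - O(\log n)$, and only Axioms~1 and~5 (not Axiom~4) are needed for this step.
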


\begin{proof}
By the discussion following the statement of Conjecture~\ref{gen_steins}, it suffices to prove that the left-hand side of~\eqref{generalised_Stein_lemma_pseudo_entanglement} is never smaller than the right-hand side. To this end, we start by applying~\cite[Theorem~16 and Lemma~13]{brandao_adversarial}, which yield the identity
\bb
\lim_{\epsilon \to 0} \limc{\liminf_{n \to \infty}} \frac1n \min_{\sigma_n \in \Pseudo_n} D^\epsilon_H \left(\rho^{\otimes n} \| \sigma_n \right) &=\lim_{\epsilon \to 0} \limc{\limsup_{n \to \infty}} \frac1n \min_{\sigma_n \in \Pseudo_n} D^\epsilon_H \left(\rho^{\otimes n} \| \sigma_n \right) \\&=
\lim_{n\to\infty} \frac1n \min_{\sigma_n\in \Pseudo_n} D^{\mathds{ALL}}( \rho^{\otimes n} \| \sigma_n)\, ,
\ee
which holds provided that the limit on the right-hand side exists. Here, $D^{\mathds{ALL}}$ is the measured relative entropy~\eqref{measured_relative_entropy}. The proof is complete if we argue that 
\bb
\lim_{n\to\infty} \frac1n \min_{\sigma_n\in \Pseudo_n} D^{\mathds{ALL}}( \rho^{\otimes n} \| \sigma_n) = D_\Pseudo^\infty(\rho)\, ,
\label{asymptotically_measuring_immaterial}
\ee
where the right-hand side is defined by~\eqref{regularised_relative_entropy_resource}. What~\eqref{asymptotically_measuring_immaterial} is telling us is that the fact that we are forced to carry out a measurement before computing the relative entropy of resource is asymptotically immaterial.

To prove~\eqref{asymptotically_measuring_immaterial} we make use of Axioms~1 and~5 in Section~\ref{general_resources_sec}, which, as mentioned, are satisfied for the sets $\Pseudo_n$ defined by~\eqref{pseudo_SEP}. For a start, due to the convexity and permutation invariance of $\Pseudo_n$ and thanks to the convexity of the measured relative entropy, we can take $\sigma_n$ in~\eqref{asymptotically_measuring_immaterial} to be permutation invariant, i.e.\ such that $\sigma_n = P_\pi \sigma_n P_\pi^\dag$ for all permutations $\pi\in S_n$, where $P_\pi$ in the unitary implementing $\pi$ over $\HH_{AB}^{\otimes n}$. In this case, it is argued in~\cite[Eq.~(48)]{berta_composite} that there exists a universal polynomial $q(n)$ such that 
\bb
|\spec(\sigma_n)| \leq q(n)\, ,
\ee
where $|\spec(\sigma_n)|$ is the size of the spectrum of $\sigma_n$, i.e.\ the number of different eigenvalues it has. Hayashi's pinching inequality~\cite{Hayashi2002} implies, e.g.\ via~\cite[Eq.~(47)]{berta_composite}, that
\bb
D(\rho^{\otimes n} \| \sigma_n) - \log q(n) \leq D\left( \pazocal{P}_{\sigma_n}(\rho^{\otimes n})\, \big\|\, \sigma_n\right) \leq D(\rho^{\otimes n} \| \sigma_n)\, ,
\label{pinching_inequality}
\ee
where $\pazocal{P}_{\sigma_n} (X) \coloneqq \lim_{T\to\infty} \frac{1}{2T} \int_{-T}^{T} dt\, \sigma_n^{it} X \sigma_n^{-it}$ is the pinching operator associated with $\sigma_n$. We now write that
\bb
\min_{\sigma_n\in \Pseudo_n} \frac1n\, D^{\mathds{ALL}}( \rho^{\otimes n} \| \sigma_n) &= \min_{\substack{\sigma_n\in \Pseudo_n,\\[1pt] \sigma_n = P_\pi \sigma_n P_\pi\, \forall\pi\in S_n}} \frac1n\, D^{\mathds{ALL}}( \rho^{\otimes n} \| \sigma_n)\\
&\textgeq{(i)} \min_{\substack{\sigma_n\in \Pseudo_n,\\[1pt] \sigma_n = P_\pi \sigma_n P_\pi\, \forall\pi\in S_n}} \frac1n\, D^{\mathds{ALL}}\!\left( \pazocal{P}_{\sigma_n}(\rho^{\otimes n})\, \big\|\, \sigma_n\right) \\
&\texteq{(ii)} \min_{\substack{\sigma_n\in \Pseudo_n,\\[1pt] \sigma_n = P_\pi \sigma_n P_\pi\, \forall\pi\in S_n}} \frac1n\, D\!\left( \pazocal{P}_{\sigma_n}(\rho^{\otimes n})\, \big\|\, \sigma_n\right) \\
&\textgeq{(iii)} \min_{\substack{\sigma_n\in \Pseudo_n,\\[1pt] \sigma_n = P_\pi \sigma_n P_\pi\, \forall\pi\in S_n}} \left( \frac1n\, D(\rho^{\otimes n} \| \sigma_n) - \frac{\log q(n)}{n} \right) \\
&\texteq{(iv)} \min_{\sigma_n\in \Pseudo_n} \frac1n\, D(\rho^{\otimes n} \| \sigma_n) - \frac{\log q(n)}{n} \\
&= \frac1n\, D_\Pseudo(\rho^{\otimes n}) - \frac{\log q(n)}{n}
\ee
Here: (i)~is a consequence of the data processing inequality for the measured relative entropy; (ii)~follows by observing that since $\left[ \pazocal{P}_{\sigma_n}(\rho^{\otimes n}),\, \sigma_n\right]=0$ we have that $D^{\mathds{ALL}}\!\left( \pazocal{P}_{\sigma_n}(\rho^{\otimes n})\, \big\|\, \sigma_n\right) = D\!\left( \pazocal{P}_{\sigma_n}(\rho^{\otimes n})\, \big\|\, \sigma_n\right)$; (iii)~is just~\eqref{pinching_inequality}; and finally (iv)~is because, once again due to the convexity and permutation invariance of $\Pseudo_n$ and to the convexity of the relative entropy, we can take $\sigma_n$ to be permutation invariant when minimising $D(\rho^{\otimes n}\|\sigma_n)$ over $\sigma_n\in \Pseudo_n$.

Taking the limit $n\to\infty$ of the above inequality and remembering that $q$ is a polynomial yields
\bb
\liminf_{n\to\infty} \frac1n \min_{\sigma_n\in \Pseudo_n} D^{\mathds{ALL}}( \rho^{\otimes n} \| \sigma_n) \geq D_\Pseudo^\infty(\rho)\, .
\ee
Since the converse inequality $\limsup_{n\to\infty} \frac1n \min_{\sigma_n\in \Pseudo_n} D^{\mathds{ALL}}( \rho^{\otimes n} \| \sigma_n) \leq D_\Pseudo^\infty(\rho)$ trivially holds, this completes the proof of~\eqref{asymptotically_measuring_immaterial} and of the proposition.
\end{proof}

We state~\eqref{asymptotically_measuring_immaterial} as a separate corollary, due to its independent interest.

\begin{cor}
Let $\MM=(\MM_n)_n$ be a family of sets that satisfies Axioms~1 and~5 as stated in Section~\ref{general_resources_sec}. Then
\bb
\lim_{n\to\infty} \frac1n \min_{\sigma_n\in \MM_n} D^{\mathds{ALL}}( \rho^{\otimes n} \| \sigma_n) = D_\MM^\infty(\rho)
\ee
holds for all states $\rho$. Here, $D^{\mathds{ALL}}$ is the measured relative entropy~\eqref{measured_relative_entropy}, and $D_\MM^\infty$ is defined by~\eqref{regularised_relative_entropy_resource}.
\end{cor}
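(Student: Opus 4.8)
The plan is to recognise this corollary as exactly the intermediate identity~\eqref{asymptotically_measuring_immaterial} established inside the proof of Proposition~\ref{k=1_prop}, and to observe that the argument given there uses only Axioms~1 and~5; I would therefore isolate that sub-argument into a standalone proof. The two ingredients are the data-processing inequality for the measured relative entropy and Hayashi's pinching inequality, combined in a squeeze.

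First I would dispose of the ``$\le$'' direction: since $D^{\mathds{ALL}}(\rho^{\otimes n}\|\sigma_n)\le D(\rho^{\otimes n}\|\sigma_n)$ for every $\sigma_n$ by data processing, minimising over $\sigma_n\in\MM_n$ gives $\tfrac1n\min_{\sigma_n\in\MM_n}D^{\mathds{ALL}}(\rho^{\otimes n}\|\sigma_n)\le\tfrac1n D_\MM(\rho^{\otimes n})$ for every $n$, and the right-hand side converges to $D_\MM^\infty(\rho)$ as defined in~\eqref{regularised_relative_entropy_resource}.

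For the ``$\ge$'' direction I would follow the pinching argument of Proposition~\ref{k=1_prop}. Because $\rho^{\otimes n}$ is permutation invariant and the measured relative entropy $D^{\mathds{ALL}}$ is unitarily invariant and jointly convex (being a supremum of jointly convex functionals, one per POVM), replacing any candidate $\sigma_n$ by its symmetrisation $\bar\sigma_n\eqdef\tfrac1{n!}\sum_{\pi\in S_n}P_\pi\sigma_n P_\pi^\dag$ does not increase $D^{\mathds{ALL}}(\rho^{\otimes n}\|\sigma_n)$, and $\bar\sigma_n\in\MM_n$ by Axioms~1 and~5; hence the minimisation may be restricted to permutation-invariant $\sigma_n$. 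For such $\sigma_n$, the Schur--Weyl decomposition bounds the number of distinct eigenvalues by a universal polynomial $q(n)$ (the estimate recorded in~\cite[Eq.~(48)]{berta_composite}), so Hayashi's pinching inequality~\cite{Hayashi2002} gives~\eqref{pinching_inequality}. Using that the pinching map $\pazocal{P}_{\sigma_n}$ is a quantum channel fixing $\sigma_n$ (data processing for $D^{\mathds{ALL}}$) and that $\pazocal{P}_{\sigma_n}(\rho^{\otimes n})$ commutes with $\sigma_n$ (so there $D^{\mathds{ALL}}=D$), I would chain
\bb
D^{\mathds{ALL}}(\rho^{\otimes n}\|\sigma_n)\ \ge\ D^{\mathds{ALL}}\!\left(\pazocal{P}_{\sigma_n}(\rho^{\otimes n})\,\big\|\,\sigma_n\right)\ =\ D\!\left(\pazocal{P}_{\sigma_n}(\rho^{\otimes n})\,\big\|\,\sigma_n\right)\ \ge\ D(\rho^{\otimes n}\|\sigma_n)-\log q(n)\,.
\ee
Minimising over the permutation-invariant $\sigma_n\in\MM_n$ and dividing by $n$ then gives $\tfrac1n\min_{\sigma_n\in\MM_n}D^{\mathds{ALL}}(\rho^{\otimes n}\|\sigma_n)\ge\tfrac1n D_\MM(\rho^{\otimes n})-\tfrac{\log q(n)}{n}$. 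Together with the easy bound this squeezes $\tfrac1n\min_{\sigma_n\in\MM_n}D^{\mathds{ALL}}(\rho^{\otimes n}\|\sigma_n)$ between $\tfrac1n D_\MM(\rho^{\otimes n})-\tfrac{\log q(n)}{n}$ and $\tfrac1n D_\MM(\rho^{\otimes n})$, and since $\log q(n)/n\to0$ while $\tfrac1n D_\MM(\rho^{\otimes n})\to D_\MM^\infty(\rho)$, the limit exists and equals $D_\MM^\infty(\rho)$.

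The main obstacle --- indeed the only step that is not bookkeeping --- is the polynomial spectrum bound $q(n)$ for permutation-invariant states. It follows from Schur--Weyl duality: an operator on $\HH^{\otimes n}$ commuting with every $P_\pi$ is block-diagonal across the isotypic components of the $S_n$-action, the number of such components grows only polynomially in $n$ for fixed $\dim\HH$, and each block acts on a space of polynomially bounded dimension. Since this is exactly the estimate already established in~\cite{berta_composite}, no genuinely new input is required, and the remainder of the proof is data processing plus the squeeze.
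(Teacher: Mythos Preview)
Your proposal is correct and follows essentially the same route as the paper: the corollary is precisely the identity~\eqref{asymptotically_measuring_immaterial} extracted from the proof of Proposition~\ref{k=1_prop}, and your argument reproduces that proof---restrict to permutation-invariant $\sigma_n$ via Axioms~1 and~5, bound the spectrum polynomially via Schur--Weyl (as in~\cite{berta_composite}), apply Hayashi's pinching inequality, and squeeze. Your justification for the symmetrisation step (unitary invariance plus joint convexity of $D^{\mathds{ALL}}$) is slightly more explicit than the paper's, but the content is identical.
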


\subsection{Pseudo-entanglement in blocks}\label{sec:pseudo_blocks}

It is possible to generalise Proposition~\ref{k=1_prop} by modifying the set of free states~\eqref{pseudo_SEP} so as to include states that are still separable between Alice and Bob, but possibly entangled among blocks of $k$ Alices and $k$ Bobs. By doing so, it can be shown that one can recover, in the limit $k\to\infty$, a formal identity somewhat analogous to~\eqref{generalised_Stein_lemma_pseudo_entanglement} but featuring 
the actual regularised relative entropy of entanglement $D^\infty_\SEP$ instead of the less transparent quantity encountered on the right-hand side of~\eqref{generalised_Stein_lemma_pseudo_entanglement}. This is promising as it explicitly shows that $D^\infty_\SEP$ can be obtained as a limit of the pseudo-entanglement approach of Section~\ref{pseudo_entanglement_sec}; however, the end result is not a proper Stein's lemma, and thus does not represent a solution of Conjecture~\ref{gen_steins} in the case of entanglement theory. Whether the main result of~\cite{Brandao2010} can be recovered by following this strategy is not clear to us, but the approach might offer some insights into the problem.

\begin{prop} \label{arbitrary_k_prop}
For a bipartite quantum system $AB$, let $\Pseudo^{(k)} = \big(\Pseudo_n^{(k)}\big)_n$ denote the family of multi-partite states defined by~\eqref{pseudo_SEP} upon making the substitutions $A\mapsto A^{(k)}$ and $B\mapsto B^{(k)}$, where $A^{(k)}=A_1\ldots A_k$ and $B^{(k)} = B_1\ldots B_k$ are made of $k$ copies of $A$ and $B$, respectively. Then for an arbitrary state $\rho=\rho_{AB}$ it holds that
\bb
\lim_{k\to\infty} \lim_{\epsilon \to 0} \limc{\liminf_{n \to \infty}} \frac{1}{nk} \min_{\sigma \in \Pseudo_n^{(k)}} D^\epsilon_H \left(\rho^{\otimes n k}\, \Big\| \,\sigma \right) = D_\SEP^\infty(\rho)\, .
\label{arbitrary_k_Stein_lemma}
\ee
\end{prop}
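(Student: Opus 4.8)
The plan is to reduce the statement to Proposition~\ref{k=1_prop} applied to a suitably coarse-grained bipartite system, and then to control the outer limit over $k$ by an elementary two-sided estimate.

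First I would fix $k$ and view $A^{(k)}B^{(k)}$ as a single bipartite system, with ``Alice'' holding $A^{(k)}=A_1\ldots A_k$ and ``Bob'' holding $B^{(k)}=B_1\ldots B_k$; after the obvious reordering of tensor factors, $\rho^{\otimes k}$ becomes a state of this system, $\rho^{\otimes k}\in\D\big(\HH_{A^{(k)}B^{(k)}}\big)$. By construction, $\MM^{(k)}=\big(\MM_n^{(k)}\big)_n$ is precisely the pseudo-entanglement family~\eqref{pseudo_SEP} built over the bipartite system $A^{(k)}B^{(k)}$; in particular it satisfies Axioms~1--5 and forms a compatible pair with the set of all measurements, i.e.\ the analogue of~\eqref{compatibility} holds for it verbatim. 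Proposition~\ref{k=1_prop} therefore applies directly and, using $(\rho^{\otimes k})^{\otimes n}=\rho^{\otimes nk}$, gives
\bb
\lim_{\epsilon\to 0}\lim_{n\to\infty}\frac1n\min_{\sigma_n\in\MM_n^{(k)}}D^\epsilon_H\big(\rho^{\otimes nk}\,\big\|\,\sigma_n\big)=D_{\MM^{(k)}}^\infty\big(\rho^{\otimes k}\big)
\ee
for every $k$; in particular the two inner limits in~\eqref{arbitrary_k_Stein_lemma} exist. Dividing by $k$, it then remains to prove
\bb
\lim_{k\to\infty}\frac1k\,D_{\MM^{(k)}}^\infty\big(\rho^{\otimes k}\big)=D_\SEP^\infty(\rho) .
\label{eq:block_reg_limit}
\ee

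For the lower bound in~\eqref{eq:block_reg_limit} I would observe that, grouping all the $A$-systems and all the $B$-systems separately, $\MM_n^{(k)}\subseteq\SEP_{A^{nk}B^{nk}}$, since a state that is separable with respect to each of the $2n$ subsystems $A^{(k)}_1,B^{(k)}_1,\ldots,A^{(k)}_n,B^{(k)}_n$ is in particular separable across the coarser bipartition $A^{(k)}_1\ldots A^{(k)}_n\,|\,B^{(k)}_1\ldots B^{(k)}_n$. Hence $D_{\MM^{(k)}}(\rho^{\otimes nk})\geq D_\SEP(\rho^{\otimes nk})\geq nk\,D_\SEP^\infty(\rho)$, where the last step uses that the regularisation in~\eqref{regularised_relative_entropy_resource} equals an infimum over the number of copies; dividing by $n$ and taking the infimum over $n$ gives $\tfrac1k D_{\MM^{(k)}}^\infty(\rho^{\otimes k})\geq D_\SEP^\infty(\rho)$ for every $k$. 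For the upper bound I would take $n=1$ in the definition~\eqref{regularised_relative_entropy_resource} of $D_{\MM^{(k)}}^\infty$, obtaining $D_{\MM^{(k)}}^\infty(\rho^{\otimes k})\leq\min_{\sigma\in\MM_1^{(k)}}D(\rho^{\otimes k}\|\sigma)$, and then note that $\MM_1^{(k)}=\SEP_{A^{(k)}B^{(k)}}=\SEP_{A^kB^k}$, so the right-hand side is exactly $D_\SEP(\rho^{\otimes k})$; thus $\tfrac1k D_{\MM^{(k)}}^\infty(\rho^{\otimes k})\leq\tfrac1k D_\SEP(\rho^{\otimes k})\to D_\SEP^\infty(\rho)$ as $k\to\infty$, by the very definition of the regularised relative entropy of entanglement. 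The two bounds squeeze the limit in~\eqref{eq:block_reg_limit}, proving it and hence the proposition.

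The argument is essentially bookkeeping once Proposition~\ref{k=1_prop} is granted; the only place that needs care is the opening step, namely confirming that the coarse-grained free sets $\MM^{(k)}$ genuinely form a pseudo-entanglement family over $A^{(k)}B^{(k)}$ --- so that Axioms~1 and~5 and the compatibility property~\eqref{compatibility}, on which the proof of Proposition~\ref{k=1_prop} rests, carry over without modification --- together with keeping straight the tensor-factor relabellings relating $\rho^{\otimes nk}$, $(\rho^{\otimes k})^{\otimes n}$ and the $A^{nk}\,|\,B^{nk}$ bipartition. I do not expect any deeper obstacle: no new Stein-type input is needed beyond Proposition~\ref{k=1_prop}, and the outer limit over $k$ is controlled purely by the monotonicity already built into~\eqref{regularised_relative_entropy_resource}.
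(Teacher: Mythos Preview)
Your proof is correct and follows essentially the same approach as the paper: apply Proposition~\ref{k=1_prop} to the coarse-grained system $A^{(k)}B^{(k)}$ with input $\rho^{\otimes k}$, then squeeze $\tfrac1k D_{\MM^{(k)}}^\infty(\rho^{\otimes k})$ between $D_\SEP^\infty(\rho)$ (via the inclusion $\MM_n^{(k)}\subseteq\SEP_{A^{nk}B^{nk}}$) and $\tfrac1k D_\SEP(\rho^{\otimes k})$ (via $\MM_1^{(k)}=\SEP_{A^kB^k}$). Your lower bound is in fact marginally sharper, since you obtain $\tfrac1k D_{\MM^{(k)}}^\infty(\rho^{\otimes k})\geq D_\SEP^\infty(\rho)$ for every fixed $k$ rather than only in the $\liminf$.
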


\begin{proof}
Proposition~\ref{k=1_prop} applied with the substitution $\rho\mapsto \rho^{\otimes k}$ yields immediately
\bb
\lim_{\epsilon \to 0} \limc{\liminf_{n \to \infty}} \frac{1}{n} \min_{\sigma \in \Pseudo_n^{(k)}} D^\epsilon_H \left(\rho^{\otimes n k}\, \Big\| \,\sigma \right) = D_{\Pseudo^{(k)}}^\infty\big(\rho^{\otimes k}\big)\, .
\ee
Let us divide both sides by $k$ and take the limit $k\to\infty$. On the one hand,
\bb
\limsup_{k\to\infty} \frac1k \lim_{\epsilon \to 0} \limc{\liminf_{n \to \infty}} \frac{1}{n} \min_{\sigma \in \Pseudo_n^{(k)}} D^\epsilon_H \left(\rho^{\otimes n k}\, \Big\| \,\sigma \right) &= \limsup_{k\to\infty} \frac1k\, D_{\Pseudo^{(k)}}^\infty\big(\rho^{\otimes k}\big) \\
&\textleq{(i)} \limsup_{k\to\infty} \frac1k\, D_{\Pseudo^{(k)}_1}\big(\rho^{\otimes k}\big) \\
&\texteq{(ii)} \limsup_{k\to\infty} \frac1k\, D_{\SEP}\big(\rho^{\otimes k}\big) \\
&= D_{\SEP}^\infty (\rho)\, ,
\label{arbitrary_k_limsup}
\ee
where (i)~is because of Fekete's lemma~\cite{Fekete1923} (see the discussion after~\eqref{regularised_relative_entropy_resource}) and (ii)~holds due to the fact that at the first level $\Pseudo^{(k)}_1 = \SEP$ comprises all separable states of the system $A^{(k)}:B^{(k)} = A_1\ldots A_k:B_1\ldots B_k$. On the other hand,
\bb
\liminf_{k\to\infty} \frac1k \lim_{\epsilon \to 0} \limc{\liminf_{n \to \infty}} \frac{1}{n} \min_{\sigma \in \Pseudo_n^{(k)}} D^\epsilon_H \left(\rho^{\otimes n k}\, \Big\| \,\sigma \right) &= \liminf_{k\to\infty} \frac1k\, D_{\Pseudo^{(k)}}^\infty\big(\rho^{\otimes k}\big) \\
&\textgeq{(iii)} \liminf_{N\to\infty} \frac1N\, D_{\SEP}\big(\rho^{\otimes N}\big) \\
&= D_{\SEP}^\infty (\rho)\, ,
\label{arbitrary_k_liminf}
\ee
where (iii)~is a consequence of the fact that for all $n$ the set $\Pseudo^{(k)}_n$ contains only separable states of $N=nk$ Alices vs.\ $N=nk$ Bobs, in formula
\bb
\Pseudo^{(k)}_n \subset \SEP_{(A^{(k)})_1\ldots (A^{(k)})_n : (B^{(k)})_1\ldots (B^{(k)})_n} = \SEP_{A_1\ldots A_{nk}:B_1\ldots B_{nk}}\, .
\ee
Putting~\eqref{arbitrary_k_limsup} and~\eqref{arbitrary_k_liminf} together proves~\eqref{arbitrary_k_Stein_lemma}.
\end{proof}


\subsection{Entanglement distillability\texorpdfstring{\,---\,}{---}via pseudo-Stein's lemma}\label{sec:entanglement-pseudo-stein}

Although it may not be possible to recover Brand\~{a}o and Plenio's original Stein's lemma and the associated reversibility of the asymptotic theory of entanglement manipulation using results obtained afterwards with different methods, we can do something a little less ambitious. We will now see that it is possible to use the results from~\cite{brandao_adversarial} to state an achievability result for entanglement distillation under asymptotically non-entangling operations.

The fundamental condition underpinning the composite quantum Stein's lemma of~\cite{brandao_adversarial} is that of \emph{compatibility}, expressed by~\eqref{compatibility}. In Section~\ref{pseudo_entanglement_sec}, we could satisfy it because we considered the set of all measurements paired with the restricted set of fully separable states. In this section, we will follow the somewhat opposite strategy: we want to take $\MM$ to be the standard set of separable states, as defined by~\eqref{SEP}; to do so, we will need to constrain the set of available measurements. Namely, given a bipartite system $AB$ consider the set of \deff{separable measurements}
\bb
\mathds{SEP}_{AB} \coloneqq \left\{ (E_x)_{x\in\pazocal{X}}:\, |\pazocal{X}|<\infty,\ E_x\in \R_+\!\cdot\! \SEP_{AB}\ \ \forall\, x\in\pazocal{X},\ \sum_x E_x=\id \right\} .
\label{SEP_measurements}
\ee
When there is no ambiguity regarding the underlying systems, we will omit the subscripts and write simply $\mathds{SEP}$. According to~\eqref{measured_relative_entropy}, the associated measured relative entropy is given by
\bb
D^{\mathds{SEP}} (\rho\|\sigma) \coloneqq \sup_{(E_x)_x\in \mathds{SEP}} \sum_x \Tr \rho E_x \log \frac{\Tr \rho E_x}{\Tr \sigma E_x}\, .
\ee
We can use this quantity to define the \deff{separably measured relative entropy of entanglement} and its regularisation by setting
\bb
D_\SEP^{\mathds{SEP}}(\rho) &\coloneqq \min_{\sigma\in \SEP_{AB}} D^{\mathds{SEP}}(\rho\|\sigma)\, ,\\
D_\SEP^{\mathds{SEP},\infty}(\rho) &\coloneqq \lim_{n\to\infty} \frac1n D^{\mathds{SEP}}_\SEP\left(\rho^{\otimes n}\right) .
\label{separably_measured_REE}
\ee
In a seminal work, Piani~\cite{Piani2009} introduced the above quantities and proved that the limit in the second line of~\eqref{separably_measured_REE} exists and is equal to $\sup_{n\in \N} \frac1n D^{\mathds{SEP}}_\SEP\left(\rho^{\otimes n}\right)$. In the same paper~\cite{Piani2009} it is also argued that
\bb
D_\SEP^{\mathds{SEP},\infty}(\rho) \geq D_\SEP^{\mathds{SEP}}(\rho) > 0
\label{Piani_inequality}
\ee
holds for all entangled states $\rho\notin \SEP$. In other words, the separably measured relative entropy of entanglement is faithful.

As mentioned, the key property that we recover by sacrificing some measurements is compatibility with the full set of separable states. Namely, we obtain that for all positive integers $n,k$ (cf.~\eqref{compatibility})
\bb
\Tr_{A^k B^k} \left[ \id_{A^nB^n}\otimes E_{A^kB^k}\, \rho_{A^{n+k}B^{n+k}} \right]\, \in\, \R_+\!\cdot\!\SEP_{A^nB^n}\quad \forall\ \rho_{A^{n+k}B^{n+k}}\in \SEP_{A^{n+k}B^{n+k}} ,\ \forall\ E_{A^kB^k}\in (E_x)_x\in \mathds{SEP}_{A^kB^k} .
\label{compatibility_SEP}
\ee
We can now state the pseudo-Stein's lemma for entanglement theory promised at the beginning of this section.

\begin{prop} \label{pseudo_Stein_entanglement_prop}
For all states $\rho$, the distillable entanglement under ANE defined by~\eqref{distillable_ANE} satisfies that
\bb
E_D^{\mathrm{ANE}}(\rho) = \lim_{\epsilon \to 0} \limc{\liminf_{n \to \infty}} \frac1n \min_{\sigma_n \in \SEP} D^\epsilon_{H} \left(\rho^{\otimes n} \| \sigma_n\right) \geq D_\SEP^{\mathds{SEP},\infty}(\rho) \geq D_\SEP^{\mathds{SEP}}(\rho)\, ,
\ee
where the quantities $D_\SEP^{\mathds{SEP},\infty}$ and $D_\SEP^{\mathds{SEP}}$ are defined by~\eqref{separably_measured_REE}. In particular, we have
\bb
E_D^{\mathrm{ANE}}(\rho) > 0 \quad \forall\ \rho\notin \SEP\, .
\label{no_bound_entanglement_ANE}
\ee
In other words, in the asymptotic theory of entanglement manipulation under ANE every entangled state is distillable. The same is true if one replaces ANE with the strictly non-entangling operations NE in the above.
\end{prop}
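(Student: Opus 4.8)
The plan is to combine the operational characterisation of $E_D^{\mathrm{ANE}}$ established in~\cite{BrandaoPlenio2} with the restricted-measurement composite Stein's lemma of~\cite{brandao_adversarial}, using the compatibility relation~\eqref{compatibility_SEP} between separable states and separable measurements as the bridge. The first equality in the statement is simply the identity for $E_D^{\mathrm{ANE}}$ in~\eqref{eq:ANE_EC_ED}, which~\cite{BrandaoPlenio2} proves independently of the generalised quantum Stein's lemma, so the whole task reduces to establishing the lower bound
\[
\lim_{\epsilon \to 0} \lim_{n \to \infty} \frac1n \min_{\sigma_n \in \SEP} D^\epsilon_{H}(\rho^{\otimes n} \| \sigma_n) \;\geq\; D_\SEP^{\mathds{SEP},\infty}(\rho)\, .
\]

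To prove this I would first pass to separable measurements: since shrinking the feasible set of POVMs in the definition~\eqref{eq:DH_def} of $D^\epsilon_H$ can only increase the optimal type-II error, one has $D^\epsilon_H(\rho^{\otimes n}\|\sigma_n)\geq D^{\epsilon,\mathds{SEP}}_H(\rho^{\otimes n}\|\sigma_n)$ for every $\sigma_n$, where $D^{\epsilon,\mathds{SEP}}_H$ is the hypothesis testing relative entropy computed over POVMs from $\mathds{SEP}_{A^nB^n}$; taking $\min_{\sigma_n\in\SEP}$ and then both limits preserves the inequality. Next, since~\eqref{compatibility_SEP} says that $\big(\SEP_{A^nB^n}\big)_n$ and $\big(\mathds{SEP}_{A^nB^n}\big)_n$ form a compatible pair in the sense of~\cite[Definition~4]{brandao_adversarial}, I would invoke~\cite[Theorem~16 and Lemma~13]{brandao_adversarial} — in complete analogy with the proof of Proposition~\ref{k=1_prop}, but with the roles of states and measurements interchanged — to get
\[
\lim_{\epsilon \to 0} \lim_{n \to \infty} \frac1n \min_{\sigma_n \in \SEP} D^{\epsilon,\mathds{SEP}}_{H}(\rho^{\otimes n} \| \sigma_n) = \lim_{n\to\infty}\frac1n \min_{\sigma_n\in\SEP} D^{\mathds{SEP}}(\rho^{\otimes n}\|\sigma_n) = \lim_{n\to\infty}\frac1n D^{\mathds{SEP}}_\SEP(\rho^{\otimes n}) = D_\SEP^{\mathds{SEP},\infty}(\rho)\, ,
\]
where the middle equality is the definition of $D^{\mathds{SEP}}_\SEP$ and the existence of the last limit is Piani's super-additivity result~\cite{Piani2009}, which is exactly the input that~\cite{brandao_adversarial} requires. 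Putting these together with~\eqref{Piani_inequality} gives $E_D^{\mathrm{ANE}}(\rho)\geq D_\SEP^{\mathds{SEP},\infty}(\rho)\geq D^{\mathds{SEP}}_\SEP(\rho)>0$ for every $\rho\notin\SEP$, which is~\eqref{no_bound_entanglement_ANE}.

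For the version with strictly non-entangling operations $\sepp_0$, the key observation is that the optimal test above uses only separable measurement operators, and a measure-and-prepare channel built from such an operator is \emph{exactly}, not merely approximately, non-entangling. Concretely, a minimax argument identifies $R_n\coloneqq\min_{\sigma\in\SEP}D^{\epsilon,\mathds{SEP}}_H(\rho^{\otimes n}\|\sigma)$ with $-\log$ of the smallest achievable worst-case type-II error $\max_{\sigma\in\SEP}\Tr[M\sigma]$ over separable POVM elements $M\in\R_+\!\cdot\!\SEP$ with $\Tr[M\rho^{\otimes n}]\geq 1-\epsilon$; letting $M_n$ attain it, so $\max_{\sigma\in\SEP}\Tr[M_n\sigma] = 2^{-R_n}$, and picking $k_n\leq R_n$ together with a separable state $\tau$ such that $2^{-k_n}\Phi_2^{\otimes k_n}+(1-2^{-k_n})\tau\in\SEP$ (which exists because $R^s_\SEP(\Phi_2^{\otimes k_n})=2^{k_n}-1$), the channel $\Lambda_n(X)\coloneqq\Tr[M_nX]\,\Phi_2^{\otimes k_n}+\Tr[(\id-M_n)X]\,\tau$ sends $\rho^{\otimes n}$ to within trace distance $2\epsilon$ of $\Phi_2^{\otimes k_n}$ and maps every $\sigma\in\SEP$ to a mixture of $\Phi_2^{\otimes k_n}$ and $\tau$ with the former weighted at most $2^{-k_n}$, hence into $\SEP$; thus $\Lambda_n\in\sepp_0$. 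Since $\frac1n R_n\to D_\SEP^{\mathds{SEP},\infty}(\rho)$ by the Stein's lemma just used, a routine diagonalisation over $\epsilon\to0$ yields $E_D^{\mathrm{NE}}(\rho)\geq D_\SEP^{\mathds{SEP},\infty}(\rho)$, and the ANE statement then also follows a fortiori because $\sepp_0\subseteq\sepp_\delta$ for every $\delta\geq0$.

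The step I expect to require the most care is the appeal to~\cite{brandao_adversarial}: one has to check that $\big(\mathds{SEP}_{A^nB^n}\big)_n$ genuinely satisfies the structural hypotheses of~\cite[Section~3]{brandao_adversarial} beyond the bare compatibility~\eqref{compatibility_SEP}, and — most importantly — that the regularised object produced by that machinery is \emph{exactly} the quantity $D_\SEP^{\mathds{SEP},\infty}(\rho)$ of~\eqref{separably_measured_REE}, rather than some a priori different regularisation of the separably measured relative entropy of entanglement. A secondary technical point is keeping the rate loss $R_n-k_n$ sublinear in the non-entangling construction while certifying that $\Lambda_n$ is exactly, not approximately, non-entangling.
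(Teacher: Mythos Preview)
For the ANE chain of inequalities your argument is essentially the paper's: both start from the identity~\eqref{eq:ANE_EC_ED} for $E_D^{\mathrm{ANE}}$ (established in~\cite{BrandaoPlenio2} independently of the generalised Stein's lemma), then use the compatibility~\eqref{compatibility_SEP} between separable states and separable measurements to invoke~\cite[Theorem~16 and Lemma~13]{brandao_adversarial}, and close with Piani's faithfulness~\eqref{Piani_inequality}. Your intermediate step of first passing to $D^{\epsilon,\mathds{SEP}}_H$ is implicit in the paper's one-line citation, and your caveat about checking that the regularised object coming out of~\cite{brandao_adversarial} is exactly $D_\SEP^{\mathds{SEP},\infty}$ is well placed but ultimately routine.

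For the NE statement the paper takes a different and shorter route: it simply cites the equality $E_D^{\mathrm{ANE}}(\rho)=E_D^{\mathrm{NE}}(\rho)$, proved independently of~\cite{Brandao2010} in~\cite[Lemma~S17]{irreversibility}, so that the NE claim follows at once from the ANE one. Your direct measure-and-prepare construction is correct, but observe that the separability of the test operator $M_n$ plays no role in showing $\Lambda_n\in\sepp_0$: what makes $\Lambda_n(\sigma)$ separable for every $\sigma\in\SEP$ is solely the worst-case bound $\max_{\sigma\in\SEP}\Tr[M_n\sigma]\leq 2^{-k_n}$ together with the choice of $\tau$. If you drop the separability restriction on $M_n$ and run the same minimax-plus-construction argument with the unrestricted optimal test, you obtain directly $E_D^{\mathrm{NE}}(\rho)\geq \lim_{\epsilon\to0}\lim_{n\to\infty}\frac1n\min_{\sigma_n\in\SEP}D^\epsilon_H(\rho^{\otimes n}\|\sigma_n)=E_D^{\mathrm{ANE}}(\rho)$, which combined with the trivial inclusion $\sepp_0\subseteq\sepp_{\delta}$ is precisely the equality the paper invokes. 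In other words, your NE argument is a self-contained proof of that external lemma, detoured somewhat unnecessarily through $\mathds{SEP}$-restricted tests; the paper's citation is terser, while your version has the merit of keeping everything inside the present text.
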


\begin{proof}
Thanks to~\eqref{eq:ANE_EC_ED} and~\eqref{Piani_inequality}, we only need to prove that $\lim_{\epsilon \to 0} \lim_{n \to \infty} \frac1n \min_{\sigma_n \in \SEP} D^\epsilon_{H} \left(\rho^{\otimes n} \| \sigma_n\right) \geq D_\SEP^{\mathds{SEP},\infty}(\rho)$. Due to the compatibility between separable states and separable measurements established in~\eqref{compatibility_SEP}, this is simply a consequence of~\cite[Theorem~16]{brandao_adversarial} (see also~\cite[Lemma~13]{brandao_adversarial}).

The result applies in the same way also to strictly non-entangling maps NE because the rate of distillation is actually the same under the two sets; that is,
\begin{equation}\begin{aligned}\label{eq:distillable_ane}
    E_D^{\mathrm{ANE}}(\rho) = E_D^{\mathrm{NE}}(\rho)
\end{aligned}\end{equation}
for any state $\rho$. A proof of this fact that avoids using the results of~\cite{Brandao2010} can be found in~\cite[Lemma~S17]{irreversibility}.
\end{proof}

The inequality in~\eqref{no_bound_entanglement_ANE} is remarkable because it demonstrates that the theory of entanglement manipulation under NE/ANE is fundamentally different from that under LOCCs. Indeed, it is well known that the latter admits \emph{bound entanglement}, i.e.\ that there exist entangled states that are undistillable under LOCCs~\cite{HorodeckiBound}. This is the case, for example, for all entangled states whose partial transposition~\cite{PeresPPT} is positive semi-definite --- such states are usually called `PPT states'.\footnote{It is worth mentioning in passing that an outstanding open problem of quantum information theory is to decide whether all LOCC-undistillable states are PPT~\cite{Horodecki-open-problems}.} However, Proposition~\ref{pseudo_Stein_entanglement_prop}, and in particular~\eqref{no_bound_entanglement_ANE}, imply that no bound entanglement can exist under NE or ANE operations: while distillation may not be possible at a rate equal to the entanglement cost of the state --- this was precisely the result of Brand\~{a}o and Plenio~\cite{BrandaoPlenio1, BrandaoPlenio2} --- it is always possible at \emph{some} non-zero rate provided that the state is entangled at all.

Although the bound from Proposition~\ref{pseudo_Stein_entanglement_prop} is of fundamental importance, it might be difficult to evaluate in practice. Let us remark a simpler single-letter bound that also applies to this setting: the quantity $E_\eta$, introduced in~\cite{irreversibility-PPT} as a lower bound on the regularised relative entropy of entanglement that is efficiently computable as a semi-definite program (SDP), can also serve as an achievability bound for entanglement distillation regardless of~\cite{Brandao2010}'s generalised Stein's lemma. Specifically,
\begin{equation}\begin{aligned}
        E_D^{\mathrm{ANE}}(\rho) &= \lim_{\epsilon \to 0} \limc{\liminf_{n \to \infty}} \frac1n \min_{\sigma_n \in \SEP} D^\epsilon_{H} \left(\rho^{\otimes n} \| \sigma_n\right)\\
        &\geq \liminf_{n \to \infty} \frac1n \min_{\sigma_n \in \SEP} D^0_{H} \left(\rho^{\otimes n} \| \sigma_n\right)\\
        &\textgeq{(i)} \liminf_{n \to \infty} \frac1n E_\eta(\rho^{\otimes n})\\
        &\texteq{(ii)}  E_\eta(\rho),
\end{aligned}\end{equation}
where the inequality~(i) is shown in~\cite[Proposition~1]{irreversibility-PPT}, and the equality in~(ii) follows from the additivity of $E_\eta$~\cite{irreversibility-PPT}. This bound can be significantly easier to evaluate than ones based on the measured relative entropy, but it is often weaker --- for instance, it trivialises for any full-rank state.

We note that the results of this section apply also to the case where separable states and separable measurements are replaced with the sets of PPT states and PPT measurements, giving rise to the class of (asymptotically) PPT-preserving operations rather than the (asymptotically) non-entangling maps that we used in our discussion.


\section{Comments on other results}\label{sec:comments}

\subsection{Overview}

Beyond the problem of resource reversibility discussed above, a number of results in the literature are affected by the fact that the generalised quantum Stein's lemma of~\cite{Brandao2010} (Conjecture~\ref{gen_steins}) can no longer be verified to be correct. At the same time, some related results are actually independent of Conjecture~\ref{gen_steins} or they admit known alternative proofs that do not rely on the problematic lemma. In this section we aim to clarify the validity of several other results.


\subsection{Relative entropy and rates of resource transformations}

Although the lack of a universally applicable generalised quantum Stein's lemma prevents us from establishing an exact expression for the rate of asymptotic resource transformations $R^{\mathrm{ARNG}}$ valid in general resource theories, it is worth noting that one of the two inequalities in~\eqref{rate_ratio_D_S} is nevertheless still true: specifically, the ratio of regularised relative entropies $D^\infty_\MM$ always serves as an upper bound for conversion rates. This is well known~\cite{Horodecki2001,horodecki_2012}, and follows essentially from the asymptotic continuity of $D_\MM$~\cite{Synak2006,tightuniform}; we provide a self-contained proof below for the sake of completeness.

\begin{lemma}
For all states $\rho,\omega$ and for all sequences of sets of states $(\MM_n)_n$ satisfying Axioms~1--5 in Section~\ref{general_resources_sec}, it holds that
\bb
R^{\mathrm{ARNG}}(\rho\to \omega) \leq \frac{D^\infty_\MM(\rho)}{D^\infty_\MM(\omega)}
\ee
provided that $D^\infty_\MM(\omega)>0$.
\end{lemma}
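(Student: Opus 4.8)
The plan is to combine two standard facts. The first is a one-shot near-monotonicity of the relative entropy of resource under $\delta$-resource-non-generating maps: if $\Lambda\in\rng_\delta(A\to A')$ then $D_\MM(\Lambda(\rho))\leq D_\MM(\rho)+\log(1+\delta)$ for every state $\rho$. To see this, pick $\sigma_\star\in\MM_A$ with $D_\MM(\rho)=D(\rho\|\sigma_\star)$ (it exists by compactness, Axiom~1, and finiteness of $D_\MM$, Axiom~2); by~\eqref{delta_rng} together with the definition~\eqref{g_robustness} of $R_\MM$, the bound $R_\MM(\Lambda(\sigma_\star))\leq\delta$ produces a free state $\tau\in\MM_{A'}$ with $\Lambda(\sigma_\star)\leq(1+\delta)\tau$, i.e.\ $\tfrac{1}{1+\delta}\Lambda(\sigma_\star)\leq\tau$. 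Using that $D(\cdot\|\cdot)$ is non-increasing in its second argument (operator monotonicity of $\log$), that $D(\alpha\|c\beta)=D(\alpha\|\beta)-\log c$ for $c>0$, and the data-processing inequality, one gets
\begin{equation*}
D_\MM(\Lambda(\rho))\;\leq\;D(\Lambda(\rho)\|\tau)\;\leq\;D\!\left(\Lambda(\rho)\,\middle\|\,\tfrac{1}{1+\delta}\Lambda(\sigma_\star)\right)\;=\;D(\Lambda(\rho)\|\Lambda(\sigma_\star))+\log(1+\delta)\;\leq\;D(\rho\|\sigma_\star)+\log(1+\delta).
\end{equation*}

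The second ingredient is the asymptotic continuity of $D_\MM$~\cite{Synak2006,tightuniform}, which lets one absorb a vanishing transformation error into a correction that is $o(n)$ after dividing by the number of copies. With these in hand I would fix an arbitrary protocol witnessing a rate for $\rho\to\omega$ under ARNG as in~\eqref{R_ARNG}: operations $\Lambda_n\in\rng_{\delta_n}$ with $\delta_n\to0$ and $\epsilon_n\coloneqq\tfrac12\,\|\Lambda_n(\rho^{\otimes n})-\omega^{\otimes k_n}\|_1\to0$. Chaining the elementary lower bound $D_\MM(\omega^{\otimes k_n})\geq k_n D^\infty_\MM(\omega)$ (valid because, by Fekete, the regularisation~\eqref{regularised_relative_entropy_resource} equals an infimum over the number of copies), then the asymptotic continuity of $D_\MM$, then the near-monotonicity above, one obtains
\begin{equation*}
k_n D^\infty_\MM(\omega)\;\leq\;D_\MM(\omega^{\otimes k_n})\;\leq\;D_\MM(\Lambda_n(\rho^{\otimes n}))+\epsilon_n k_n\log d+g(\epsilon_n)\;\leq\;D_\MM(\rho^{\otimes n})+\log(1+\delta_n)+\epsilon_n k_n\log d+g(\epsilon_n),
\end{equation*}
where $d$ is the single-copy dimension of the system carrying $\omega$ (so $\omega^{\otimes k_n}$ and $\Lambda_n(\rho^{\otimes n})$ live on a $d^{k_n}$-dimensional space) and $g(\epsilon)\to0$ as $\epsilon\to0^+$; the precise constants in the continuity bound are immaterial.

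Dividing the last display by $n$ and rearranging,
\begin{equation*}
\frac{k_n}{n}\left(D^\infty_\MM(\omega)-\epsilon_n\log d\right)\;\leq\;\frac1n D_\MM(\rho^{\otimes n})+\frac{\log(1+\delta_n)}{n}+\frac{g(\epsilon_n)}{n}\,.
\end{equation*}
As $n\to\infty$ the right-hand side tends to $D^\infty_\MM(\rho)$ (the last two terms vanish), while $D^\infty_\MM(\omega)-\epsilon_n\log d\to D^\infty_\MM(\omega)>0$. Taking $\liminf$ of both sides and using that the bracketed factor has a strictly positive limit yields $\left(\liminf_n\tfrac{k_n}{n}\right)D^\infty_\MM(\omega)\leq D^\infty_\MM(\rho)$, which is finite since $D^\infty_\MM(\rho)\leq D_\MM(\rho)<\infty$ by Axiom~2. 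Hence the rate of this protocol is at most $D^\infty_\MM(\rho)/D^\infty_\MM(\omega)$ (in particular finite, so no infinite rate can occur); taking the supremum over protocols gives $R^{\mathrm{ARNG}}(\rho\to\omega)\leq D^\infty_\MM(\rho)/D^\infty_\MM(\omega)$.

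The only genuinely delicate ingredient is the asymptotic continuity of $D_\MM$ in a form whose trace-distance error term grows at most linearly in the number of copies, so that it stays $o(n)$ after division by $n$; this is exactly what the dimension-dependent uniform continuity bounds of~\cite{Synak2006,tightuniform} deliver, using that each $\MM_m$ contains a full-rank state (Axiom~2) so that $D_{\MM_m}$ is finite on all of $\D(\HH^{\otimes m})$. Everything else — the one-shot near-monotonicity, the infimum characterisation of $D^\infty_\MM$, and the limit manipulations — is routine.
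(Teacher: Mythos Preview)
Your proof is correct and follows essentially the same route as the paper's: near-monotonicity of $D_\MM$ under $\rng_\delta$, asymptotic continuity of $D_\MM$, and the Fekete-based limit argument. The only cosmetic differences are that you supply a self-contained proof of the near-monotonicity (with the slightly sharper constant $\log(1+\delta)$ in place of the paper's $-\log(1-\delta_n)$, which cites~\cite[Lemma~IV.5]{BrandaoPlenio2}) and that you invoke the inequality $D_\MM(\omega^{\otimes k_n})\geq k_n D^\infty_\MM(\omega)$ explicitly rather than passing through $\tfrac{1}{k_n}D_\MM(\omega^{\otimes k_n})$ in the limit.
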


\begin{proof}
The argument is obtained from the one on pp.~841--842 of~\cite{BrandaoPlenio2} with minor modifications. For an arbitrary sequence $\Lambda_n\in \rng_{\delta_n}$ of protocols appearing in~\eqref{R_ARNG}, denoting $\epsilon_n\coloneqq \frac12 \left\|\Lambda_n(\rho^{\otimes n}) - \omega^{\otimes k_n} \right\|_1$ we can write that
\bb
D_\MM\left(\rho^{\otimes n}\right) &\textgeq{(i)} D_\MM\left(\Lambda_n(\rho^{\otimes n})\right) + \log(1-\delta_n) \\
&\textgeq{(ii)} D_\MM\big( \omega^{\otimes k_n}\big) 
+ \epsilon_n k_n \log\lambda - g(\epsilon_n) + \log(1-\delta_n) \, .
\ee
Here, (i)~follows from the fact that $D_\MM(\Lambda(\rho))\leq D_\MM(\rho) - \log(1-\delta)$ for all $\Lambda\in \rng_\delta$, as can be seen by noticing that Lemma~IV.5 in~\cite{BrandaoPlenio2} holds not only for entanglement but also for the slightly more general case at hand here; in~(ii) we employed the asymptotic continuity of $D_\MM$~\cite[Lemma~7]{tightuniform}, 
introducing the auxiliary function $g(x)\coloneqq (1+x)\log (1+x) - x\log x$  and using the fact that the existence of a full-rank state $\sigma$ such that $\sigma^{\otimes m} \in \MM_m \; \forall m$ --- ensured by Axiom~2 of Section~\ref{general_resources_sec} --- guarantees that the relative entropy of resource of any state $\tau$ acting on the same space as $\omega^{\otimes k_n}$ is bounded as
$D_{\MM_{k_n}}(\tau) \leq D(\tau \| \sigma^{\otimes k_n}) \leq - \log \lambda^{k_n}$,  with $\lambda > 0$ standing for the smallest eigenvalue of $\sigma$. 
Dividing both sides by $n$ and taking the limit yields
\bb
D_\MM^\infty(\rho) &= \lim_{n\to\infty} \frac1n D_\MM\left(\rho^{\otimes n}\right) \\
&\geq \liminf_{n\to\infty} \frac{k_n}{n} \left(\frac{1}{k_n} D_\MM\big(\omega^{\otimes k_n}\big)  
+ \epsilon_n \log \lambda\right) - \limsup_{n\to\infty}\left( \frac1n g(\epsilon_n) - \frac1n \log(1-\delta_n)\right) \\
&\texteq{(iii)} \left( \liminf_{n\to\infty} \frac{k_n}{n}\right) D_\MM^\infty(\omega)\, ,
\ee
where in~(iii) we assumed that $\lim_{n\to\infty} \delta_n = 0$ and also $\lim_{n\to\infty} \epsilon_n=0$. Since the sequence of operations $\Lambda_n\in \rng_{\delta_n}$ is otherwise arbitrary, the claim follows.
\end{proof}


\subsection{Asymptotic properties of quantum entanglement}

One of the 
applications of the methods used to study the generalised quantum Stein's lemma, originally discussed in~\cite{Brandao2010}, is the fact that the regularised relative entropy of entanglement is faithful; that is, for any state $\rho \notin \SEP$, it holds that
\begin{equation}\begin{aligned}\label{eq:relent_faithful}
D^\infty_\SEP(\rho) > 0.     
\end{aligned}\end{equation} 
This result is a priori not obvious, as it is well known that the relative entropy of entanglement can be strictly subadditive~\cite{Werner-symmetry}, and furthermore there exist choices of sets $(\MM_n)_n$ such that $D^\infty_\MM(\rho) = 0$ even for resourceful states~\cite{gour_2009}. The investigation of this result in~\cite{Brandao2010} only uses the \emph{converse} part of the generalised quantum Stein's lemma, which is unaffected by the gap in the proof of Conjecture~\ref{gen_steins}.

Furthermore, as can already be deduced from our discussion in Section~\ref{sec:entanglement-pseudo-stein}, the work of Piani~\cite{Piani2009} provides an alternative proof of the faithfulness of $D^\infty_\SEP$. Indeed, this result applies not only to the theory of entanglement, but also to more general resource theories whose free measurements contain informationally complete POVMs (see~\cite[Theorem~1 and the subsequent discussion]{Piani2009}).

Another result concerning the asymptotic behaviour of entangled states is the fact that the distance of any entangled state from the set of separable states necessarily goes to $1$ in the many-copy limit; specifically, for any $\rho \notin \SEP$,
\begin{equation}\begin{aligned}
    \lim_{n\to\infty} \min_{\sigma_n \in \SEP} \frac12 \left\| \rho^{\otimes n} - \sigma_n \right\|_1 = 1.
    \label{Beigi_Shor}
\end{aligned}\end{equation}
The above identity~\eqref{Beigi_Shor} shows, for instance, that the set of states with positive partial transpose cannot provide a good approximation to the set of separable states asymptotically.
As discussed in the original work of Beigi and Shor~\cite{beigi_2010}, this result can be considered to be a consequence of the fact that all entangled states are distillable under non-entangling operations, which is in turn a consequence of the generalised quantum Stein's lemma.
However, we have already seen in our Proposition~\ref{pseudo_Stein_entanglement_prop} that this distillability can be recovered using a different approach. In fact, \cite{beigi_2010} already gave an alternative proof of~\eqref{Beigi_Shor} that does not use the generalised quantum Stein's lemma or the distillability properties of entanglement, so this finding is unaffected.

There are also results in the literature that ostensibly rely on the findings of~\cite{Brandao2010}, but they actually only use the fact that
\begin{equation}\begin{aligned}\label{eq:dmax_regularised}
    \lim_{\epsilon \to 0}  \limc{\limsup_{n\to\infty}} \frac1n \min_{\sigma_n \in \SEP} D^\epsilon_{\max} \left(\rho^{\otimes n} \| \sigma_n\right) = D_\SEP^\infty(\rho).
\end{aligned}\end{equation}
This is the case, for instance, for the result on disentanglement cost of quantum states~\cite{berta_2018}. We reiterate that the relation \eqref{eq:dmax_regularised} is still true; the proof of this fact in~\cite{BrandaoPlenio2} can be verified not to require the generalised quantum Stein's lemma to hold, and an alternative proof can be found in~\cite{datta_2009-2}. This applies also to more general resource theories under Axioms~1--5 as stated in Section~\ref{general_resources_sec}.

Finally, a recent result that complements the reversibility of entanglement studied in~\cite{BrandaoPlenio1,BrandaoPlenio2} is the \emph{ir}reversibility of entanglement shown in~\cite{irreversibility} under non-entangling operations and variants thereof. This result does not make use of the generalised quantum Stein's lemma and holds regardless of the validity of~\cite{Brandao2010}. We also stress that the work~\cite{irreversibility} does not imply that entanglement cannot be asymptotically reversible in the sense of~\cite{BrandaoPlenio1,BrandaoPlenio2}: entanglement reversibility under strictly non-entangling operations (which was left as an open question in~\cite{BrandaoPlenio2}) is now ruled out, but there still remains a possibility that the larger class of operations used in~\cite{BrandaoPlenio1,BrandaoPlenio2} is capable of transforming entangled states reversibly. Note, however, that~\cite{irreversibility} shows that any reversible entanglement framework must generate exponentially large amounts of entanglement according to some entanglement monotones such as the negativity, meaning that the conjectured results of~\cite{BrandaoPlenio1,BrandaoPlenio2} heavily depend on the specific choice of the generalised robustness $R_\SEP$ as the quantifier of the generated entanglement.


\subsection{Faithfulness of squashed entanglement}

A result related to the issues discussed above is the faithfulness of another measure of entanglement, the so-called squashed entanglement $E_{\rm sq}$. This quantity is defined as~\cite{squashed,Tucci1999}
\begin{equation}\begin{aligned}
    E_{\rm sq}(\rho_{AB}) \coloneqq \inf_{\rho_{ABE}} \, \frac12 I(A:B | E)_\rho,
\end{aligned}\end{equation}
where the minimisation is over all quantum states $\rho_{ABE}$ such that $\Tr_E \rho_{ABE} = \rho_{AB}$, and where $I(A:B | E) = D(\rho_{ABE} \| \rho_{A} \otimes \rho_{BE}) - D(\rho_{AE} \| \rho_{A} \otimes \rho_{E})$ is the conditional mutual information.

The first proof of the faithfulness of $E_{\rm sq}$, appearing in~\cite{faithful}, made explicit use of the generalised quantum Stein's lemma (Conjecture~\ref{gen_steins}), and therefore its validity cannot be verified. However, the basic underlying reasoning of the proof is still valid, and with some modifications the original proof method can be salvaged --- we will now describe this in detail. This has been independently recognised~\cite{personal} by the authors of~\cite{faithful}. Furthermore, a closely related proof that avoids the use of the quantum Stein's lemma appeared subsequently in~\cite{rel-ent-sq}, and two alternative approaches to proving the faithfulness of squashed entanglement can be found in~\cite{VV-Markov} and~\cite{berta_2023}. 

The approach of~\cite{faithful} proceeds in three steps: namely, Lemmas~1,~2, and~3 in the paper. Lemma~1 relates the conditional mutual information with the regularised relative entropy of entanglement as
\begin{equation}\begin{aligned}
    I(A:B | E)_\rho \geq D^\infty_\SEP(\rho_{A:BE}) - D^\infty_\SEP(\rho_{A:E}).
\end{aligned}\end{equation}
Here and in the remainder of this section, we use $\rho_{A:B}$ to denote the bipartition with respect to which separability is considered. Lemma~2 of~\cite{faithful} then claims to give a monogamy--like relation for $D^\infty_\SEP$ as
\begin{equation}\begin{aligned}\label{eq:faithful_lemma2}
    D^\infty_\SEP(\rho_{A:BE}) - D^\infty_\SEP(\rho_{A:E}) \textgeq{?} \lim_{\epsilon \to 0} \lim_{n \to \infty} \frac1n \min_{\sigma_n \in \SEP} D^{\epsilon,\mathds{LOCC}_\leftarrow}_H \left(\rho^{\otimes n}_{A:B} \,\big\|\, \sigma_n \right),
\end{aligned}\end{equation}
where $D^{\epsilon,\mathds{E}}_H$ denotes the hypothesis testing relative entropy where the discriminating POVM $\{M, \id - M\}$ is restricted to the class of measurements $\mathds{E}$, and $\mathds{LOCC}_\leftarrow$ represents the measurements that are implementable with local operations assisted by one-way classical communication from Bob to Alice. Lemma~3 concludes the result by lower bounding the right-hand side of~\eqref{eq:faithful_lemma2} with a distance between $\rho_{A:B}$ and all separable states. Lemmas~1 and~3 are unaffected by the considerations of this note, but the proof of Lemma~2 directly uses the generalised quantum Stein's lemma to relate $D^\infty_\SEP$ with the hypothesis testing error exponent. To circumvent this problem, one can instead work directly with the exponent
\begin{equation}\begin{aligned}
   \lim_{\epsilon \to 0} \limc{\liminf_{n \to \infty}} \frac1n \min_{\sigma_n \in \SEP} D^\epsilon_{H} \left(\rho_{A:B}^{\otimes n} \,\big\|\, \sigma_n\right) =  E_D^{\mathrm{ANE}}(\rho_{A:B}).
\end{aligned}\end{equation}
We formalise this approach as follows.

\begin{lemma}\label{faithful_lemma1_new}
For any bipartite state $\rho_{AB}$ with extension $\rho_{ABE}$, it holds that
\begin{equation}\begin{aligned}\label{eq:faithful_lemma1_new}
    I(A:B | E)_\rho \geq E_D^{\mathrm{ANE}}(\rho_{A:BE}) - E_D^{\mathrm{ANE}}(\rho_{A:E}).
\end{aligned}\end{equation}
Furthermore, Lemma~2 of~\cite{faithful} shows that
\begin{equation}\begin{aligned}\label{eq:faithful_lemma2_new}
    E_D^{\mathrm{ANE}}(\rho_{A:BE}) - E_D^{\mathrm{ANE}}(\rho_{A:E}) \geq \lim_{\epsilon \to 0} \limc{\liminf_{n \to \infty}} \frac1n \min_{\sigma_n \in \SEP} D^{\epsilon,\mathds{LOCC}_\leftarrow}_H \left(\rho^{\otimes n}_{A:B} \,\big\|\, \sigma_n \right),
\end{aligned}\end{equation}
together with Lemma~3 of~\cite{faithful} implying the faithfulness of squashed entanglement $E_{\rm sq}(\rho_{A:B})$.
\end{lemma}
The relation in~\eqref{eq:faithful_lemma2_new} is already shown in~\cite{faithful} explicitly; one should simply ignore the statements that equate $E_D^{\mathrm{ANE}}$ with $D^\infty_\SEP$. More precisely, the proof of Lemma~2 on pp.~820--822 of~\cite{faithful} is overall valid because it proceeds by operational arguments, constructing a test for state discrimination on $A:BE$ from a $\mathds{LOCC}_\leftarrow$ test on $A:B$ 
and a generic test on $A:E$. The two identities that are no longer known to be correct are the unnumbered equation below Eq.~(66) and the last line of Eq.~(77) in~\cite{faithful}.

It thus remains to show~\eqref{eq:faithful_lemma1_new}. Inspecting the proof of Lemma~1 in~\cite{faithful}, one can notice that it crucially depends on a property of $D^\infty_\SEP$ known as non-lockability. Our approach will be to derive a similar relation for $E_D^{\mathrm{ANE}}$ directly. Combined with the known asymptotic results concerning achievable rates of quantum state redistribution~\cite{devetak_2008-1}, this will establish~\eqref{eq:faithful_lemma1_new} as desired.

Let us begin by recalling the setting of quantum state redistribution. Here, we begin with a pure state $\rho_{ABER}$ shared by three parties: the sender, the receiver, and a purifying referee. For the sake of consistency with the notation of~\cite{faithful}, let us assume that the sender holds the systems $RB$ and aims to send $B$ to the receiver, who is in possession of the system $E$. The two parties additionally share a supply of maximally entangled states in systems $T_1 T_2$, with $T_1$ belonging to the sender and $T_2$ to the receiver. The system $A$ represents the referee. Quantum state redistribution then proceeds in three steps:
\begin{enumerate}[(1)]
\item The sender applies a local encoding operation $RBT_1 \to RQT'_1$.
\item The sender sends the system $Q$ to the receiver.
\item The receiver applies a local decoding operation $QET_2 \to BET'_2$.
\end{enumerate}
 Let us then assume that there exists an $\ve$-error one-shot state redistribution protocol $\Lambda$, that is, that for some suitable encoding and decoding operations we can obtain a state $\omega_{A:BERT'_1T'_2} = \Lambda (\rho_{AB:ER} \otimes \Phi_{T_1T_2})$ such that
\begin{equation}\begin{aligned}
    P\left(\omega_{A:BERT'_1T'_2}, \rho_{A:BER} \otimes \Phi_{T'_1T'_2}\right) \leq \ve
\end{aligned}\end{equation}
where $P(\rho,\omega) = \sqrt{1-F(\rho,\omega)}$ is the purified distance, with $F(\rho,\omega) = \| \sqrt{\rho}\sqrt{\vphantom{\rho}\omega} \|^2$ the (squared) fidelity. Here we are interested only in the communication cost of this task, that is, the dimension of the system $Q$; we will denote this by
\begin{equation}\begin{aligned}
    q^\ve(B:E|A)_{\rho} \coloneqq \log |Q|.
\end{aligned}\end{equation}

Our main contribution is then the following result.

\begin{lemma}\label{lem:oneshot_redist}
The communication cost of any $\ve$-error one-shot state redistribution protocol as described above satisfies
\begin{equation}\begin{aligned}
   q^\ve(B:E|A)_{\rho} \geq \frac12 \min_{\sigma \in \SEP} D^{\ve}_H (\rho_{A:BE} \| \sigma) + \frac12 \min_{\sigma' \in \SEP} D^{3\ve}_H (\rho_{A:E} \| \sigma') - \log \frac{4}{\ve^2}.
\end{aligned}\end{equation}
\end{lemma}
\begin{proof}
For simplicity, we will employ the notation $D^\ve_H (\rho_{A:BE} \| \SEP) \coloneqq \min_{\sigma \in \SEP(A:BE)} D^\ve_H(\rho_{A:BE} \| \sigma)$ and analogously for $D^\ve_{\max}$. We will now fix the smoothing in the definition of $D^\ve_{\max}$ to be with respect to the purified distance, $\Delta(\rho,\rho') = P(\rho,\rho')$.

Pick some $\eta, \delta \in (0,1)$ such that $\eta - \delta \in (\ve, 1)$. Denoting by $\omega_{A:BERT'_1T'_2}$ a state such that $P\left(\omega_{A:BERT'_1T'_2}, \rho_{A:BER} \otimes \Phi_{T'_1T'_2}\right) \leq \ve$ as above, we have that
\begin{align}
    D^{\eta - \delta - \ve}_H (\rho_{A:BE} \| \SEP) - 2 \log \frac{2}{\delta}\ &\textleq{(i)}\ D^{\eta - \delta}_H (\omega_{A:BE} \| \SEP) - 2 \log \frac{2}{\delta} \nonumber \\
    &\textleq{(ii)}\  D^{\eta - \delta}_H (\omega_{A:BET'_2} \| \SEP) - 2 \log \frac{2}{\delta} \nonumber \\
    &\textleq{(iii)}\ D^{\eta - \delta}_H (\omega_{A:QET_2} \| \SEP) - 2 \log \frac{2}{\delta} \nonumber \\
    &\textleq{(iv)}\  D^{\sqrt{1-\eta}}_{\max} (\omega_{A:QET_2} \| \SEP) - \log \frac{1}{\eta} \nonumber \\
    &\textleq{(v)}\  D^{\sqrt{1-\eta}}_{\max} (\omega_{A:ET_2} \| \SEP) - \log \frac{1}{\eta} + 2 \log |Q| \\
    &\texteq{(vi)}\  D^{\sqrt{1-\eta}}_{\max} (\rho_{A:E} \otimes \Phi_{T_2} \| \SEP) - \log \frac{1}{\eta} + 2 \log |Q| \nonumber \\
    &\texteq{(vii)}\  D^{\sqrt{1-\eta}}_{\max} (\rho_{A:E} \| \SEP) - \log \frac{1}{\eta} + 2 \log |Q| \nonumber \\
    &\textleq{(viii)}\  D^{\eta}_{H} (\rho_{A:E} \| \SEP) + 2 \log |Q|. \nonumber
\end{align}
Here: (i)~follows because $P(\rho,\omega) \leq \ve \Rightarrow D^{\eta + \ve}_H(\omega \| \sigma) \geq D^{\eta}_H(\rho \| \sigma)$ (Lemma~\ref{lem:DH_continuity} in the Appendix); (ii)~is through the data processing inequality for $D^\ve_H$; (iii)~follows again by data processing --- this is step~(3) of the quantum state redistribution protocol; in~(iv) we use the inequality~\cite[Theorem~4]{anshu_2019}
\begin{equation}\begin{aligned}
     D^{\sqrt{1-\eta}}_{\max}(\rho \| \sigma) - \log\frac{1}{\eta} \geq D^{\eta-\delta}_H (\rho \| \sigma) - 2 \log\frac{2}{\delta};
 \end{aligned}\end{equation} 
(v)~is the non-lockability of the smooth max-relative entropy, which we show in Lemma~\ref{lem:Dmax_nonlock}; (vi)~is because steps~(1) and~(2) of the state redistribution protocol leave systems $AET_2$ unchanged; (vii)~follows because $D^\ve_{\max}(\cdot \| \SEP)$ does not change under tensor product with a separable state due to its monotonicity under all non-entangling maps 
(this is a special case of~\cite[Lemma~2]{brandao_2011}); finally, (viii)~follows by the inequality~\cite[Theorem~4]{anshu_2019}
\begin{equation}\begin{aligned}
     D^{\eta}_H (\rho \| \sigma) \geq  D^{\sqrt{1-\eta}}_{\max}(\rho \| \sigma)  - \log \frac{1}{\eta}.
 \end{aligned}\end{equation} 
 Choosing $\eta = 3\ve, \delta = \ve$ concludes the proof.
\end{proof}

\begin{proof}[Proof of Lemma~\ref{faithful_lemma1_new}]
We start by applying the result of Lemma~\ref{lem:oneshot_redist} to $\rho^{\otimes n}$. Dividing by $n$, taking the limits $n \to \infty$ and $\ve \to 0$, and using the fact that there exists a sequence of state redistribution protocols such that~\cite{devetak_2008-1}
\begin{equation}\begin{aligned}
    \lim_{\ve \to 0} \limc{\liminf_{n \to \infty}} \frac1n q^\ve(B:E|A)_{\rho^{\otimes n}} =  I(A:B | E)
\end{aligned}\end{equation}
gives the stated result.
\end{proof}


\subsection{Quantum channel manipulation}

The description of certain asymptotic properties of quantum channels, including generalisations of quantum hypothesis testing, often relies on generalised Stein--type results. This is because, even if one is interested in studying a channel $\Lambda$ in the i.i.d.\ setting as $\Lambda^{\otimes n}$, the fact that input states to the channel may be entangled means that $\Lambda^{\otimes n}(\rho)$ need not be i.i.d.\ for a general state $\rho \in \D(\HH^{\otimes n})$. This then necessitates the use of more general variants of the quantum Stein's lemma to fully understand the asymptotic behaviour of the involved channels and states, and in particular in the characterisation of strong converse properties of channel discrimination.

Results in the study of quantum channel manipulation that either directly employ Conjecture~\ref{gen_steins} (e.g.~\cite[Theorem~6]{Gour-Winter}) or make use of the same tools as~\cite{Brandao2010} (e.g.~\cite[Theorem~12]{fang_preprint}) can no longer be considered correct, and will require alternative proofs to verify their validity.
On the other hand, binary quantum channel discrimination is well understood at the level of weak converse rates~\cite{Berta2018,wang_2019-4,Fang2020}, where natural extensions of the quantum Stein's lemma have been shown independently of~\cite{Brandao2010}. The composite strong converse case has also been solved in some restricted settings, including in particular the quantum Shannon's theorem~\cite{Bennett2014,Bennett2002,Berta2011} or the related setting of distinguishing a channel from the set of replacer channels~\cite{Cooney2016}.

\section{Discussion}

We conclude by recounting several known approaches that could be used to tackle the newly reopened question of the generalised quantum Stein's lemma. Let us first recall that the main claim (Conjecture~\ref{gen_steins}) concerns the question of whether
\begin{equation}\begin{aligned}
     \lim_{\ve \to 0} \limc{\liminf_{n \to \infty}} \frac1n \min_{\sigma_n \in \MM_n} D_{H}^{\ve} (\rho^{\otimes n} \| \sigma_n) \texteq{?} D^\infty_\MM(\rho)
\end{aligned}\end{equation}
for a family of sets $(\MM_n)_n$ of quantum states in consideration, and in particular for the separable states $(\SEP_n)_n$. This problem can be understood in several ways.

\begin{itemize}
\item The direct achievability proof. For all states $\rho$, find a sequence of tests $(M_n)_n$ such that $0 \leq M_n \leq \id$, $\lim_{n\to\infty} \Tr \rho^{\otimes n} M_n  = 1$, and
\begin{equation}\begin{aligned}
    \limc{\liminf_{n \to \infty}} - \frac1n \sup_{\sigma_n \in \MM_n} \log \Tr \sigma_n M_n = D^\infty_\MM(\rho),
\end{aligned}\end{equation}
or show that there exists a state $\rho$ for which no such tests can exist.

\item The strong converse $D_{\max}$ approach. As discussed previously, the max-relative entropy satisfies the property that~\cite{Brandao2010,datta_2009-2}
\begin{equation}\begin{aligned}
    \lim_{\ve \to 0} \limc{\limsup_{n \to \infty}} \frac1n \min_{\sigma_n \in \MM_n} D_{\max}^{\ve} (\rho^{\otimes n} \| \sigma_n) =     \lim_{\ve \to 0} \limc{\liminf_{n \to \infty}} \frac1n \min_{\sigma_n \in \MM_n} D_{\max}^{\ve} (\rho^{\otimes n} \| \sigma_n) =  D^\infty_\MM(\rho).
\end{aligned}\end{equation}
It is furthermore known that $D^\ve_H$ is essentially equivalent to $D^{\sqrt{1-\ve}}_{\max}$~\cite{tomamichel_2013,anshu_2019}, in the sense that the two functions are equal up to $\ve$-dependent terms that vanish asymptotically. Because of this, the statement of Conjecture~\ref{gen_steins} is equivalent to the question of whether
\begin{equation}\begin{aligned}
    \lim_{\ve \to 0} \limc{\liminf_{n \to \infty}} \frac1n \min_{\sigma_n \in \MM_n} D_{\max}^{1-\ve} (\rho^{\otimes n} \| \sigma_n) \texteq{?} D^\infty_\MM(\rho).
\end{aligned}\end{equation}
Analogous statements can be obtained using other quantities that are asymptotically equivalent to $D^\ve_H$, for instance the information spectrum relative entropy~\cite{tomamichel_2013}.

\item Petz--R\'enyi limit interchange approach. The hypothesis testing relative entropy is lower bounded by the Petz--R\'enyi divergences~\cite{PetzRenyi} $D_\alpha(\rho\|\sigma) \coloneqq \frac{1}{\alpha-1} \log \Tr \rho^{\alpha} \sigma^{1-\alpha}$ as~\cite{Audenaert2008}
\begin{equation}\begin{aligned}
    D^\ve_H(\rho \| \sigma) \geq D_\alpha(\rho\|\sigma) + \frac{\alpha}{1-\alpha} \log \frac1\ve
\end{aligned}\end{equation}
for all $\alpha \in (0,1)$. Furthermore, it is well known that $D_\alpha$ converge to the quantum relative entropy in the limit as $\alpha\to 1$~\cite{PetzRenyi}. Because of this, showing that
 \begin{equation}\begin{aligned}\label{eq:petz}
     \lim_{\alpha\to1^-} \limc{\liminf_{n\to\infty}} \frac1n \min_{\sigma_n \in \MM_n} D_\alpha(\rho^{\otimes n} \| \sigma_n) \texteq{?}  \lim_{n\to\infty} \lim_{\alpha\to1^-}  \frac1n \min_{\sigma_n \in \MM_n} D_\alpha(\rho^{\otimes n} \| \sigma_n) = D^\infty_\MM(\rho)
 \end{aligned}\end{equation}
 would imply the generalised quantum Stein's lemma. The rightmost equality in~\eqref{eq:petz} follows by standard minimax arguments~\cite[Corollary~A.2]{mosonyi_2011}.
 \end{itemize}
 
For the case of entanglement theory (i.e.\ $\MM$ being the separable states), 
one can also equivalently study an operational argument based on entanglement distillation.

\begin{itemize}
\item For all states $\rho$ and all $\ve>0$, show that there exists an asymptotically non-entangling protocol that distills entanglement at the rate $r = D^\infty_S(\rho) - \ve$, or show that there exists a state $\rho$ whose rate of distillation under all asymptotically non-entangling protocols (equivalently: all non-entangling protocols, c.f.~\eqref{eq:distillable_ane}) is bounded away from $D^\infty_S(\rho)$.

\end{itemize}

Solving any of the above problems in the affirmative would restore the conjectured generalised quantum Stein's lemma, whose importance we hope to have outlined in this paper. This would in particular re-establish the claimed results of~\cite{BrandaoPlenio1,BrandaoPlenio2,Brandao-Gour}, showing the possibility of the existence of a reversible theory of entanglement and/or other quantum resources. A negative resolution of the conjecture, on the other hand, would carry a flurry of consequences of its own, notably reinvigorating the questions of whether general quantum resources can be reversibly manipulated whatsoever and what the asymptotic error exponent in discriminating a quantum state from a general set of quantum states is.


\begin{acknowledgments}
We are grateful to Ryuji Takagi for making us aware of an inaccuracy in our discussion in Section~V. We thank Alexander Streltsov and Mark M.\ Wilde for comments on the manuscript. 
MB acknowledges funding by the European Research Council (ERC Grant Agreement No.\ 948139). This project was started when MB was additionally affiliated with the AWS Center for Quantum Computing, Pasadena, USA. LL was partly financially supported by the Alexander von Humboldt Foundation. GG is supported by Natural Sciences and Engineering Research Council of Canada (NSERC). BR was partially supported by the Japan Society for the Promotion of Science (JSPS) KAKENHI Grant No.\ 22KF0067 and the JSPS Postdoctoral Fellowship for Research in Japan. MT is funded by the National Research Foundation, Prime Minister's Office, Singapore and the Ministry of Education, Singapore under the Research Centres of Excellence programme, as well as startup grants (R-263-000-E32-133 and R-263-000-E32-731).
\end{acknowledgments}


\bibliography{biblio}


\appendix

\section{Auxiliary lemmas}

\begin{lemma}\label{lem:DH_continuity}
Let $\omega$ be a state  such that $P(\rho,\omega)\leq \delta$. Then, for any state $\sigma$, $D_H^{\ve+\delta} (\rho \| \sigma) \geq D_H^\ve(\omega \| \sigma)$.
\end{lemma}
\begin{proof}
By the Fuchs--van de Graaf inequality~\cite{Fuchs1999}, we have that $\frac12 \|\rho - \omega\|_1 \leq \delta$. By definition of $D^\ve_H$, there exists a measurement operator $M \in [0,\id]$ such that $\Tr \left[(\id - M) \omega\right] \leq \ve$. Then
\begin{equation}\begin{aligned}
    \Tr \left[ (\id - M) \rho \right] &= \Tr \left[ (\id - M) (\rho - \omega)\right] + \Tr \left[ (\id - M) \omega \right]\\
    &\leq \frac12 \|\rho - \omega\|_1 + \ve\\
    &\leq \delta + \ve,
\end{aligned}\end{equation}
where in the second line we used the fact that
\begin{equation}\begin{aligned}
    \frac12 \|\rho - \omega\|_1 = \max_{0 \leq W \leq \id} \Tr \left[ W (\rho - \omega) \right]
\end{aligned}\end{equation}
for normalised states $\rho, \omega$. This means that $M$ is a feasible measurement operator for $D^{\ve+\delta}_H(\rho \| \sigma)$ as desired.
\end{proof}

\begin{lemma}\label{lem:Dmax_nonlock}
For any tripartite state $\rho_{ABE}$, it holds that
\begin{equation}\begin{aligned}
     D^\ve_{\max}(\rho_{A:BE} \| \SEP) \leq D^\ve_{\max}(\rho_{A:E} \| \SEP) + 2 \log |B|.
\end{aligned}\end{equation}
\end{lemma}
\begin{proof}
Let $\sigma'_{A:E}$ be an optimal separable state and $\widetilde\rho_{AE}$ an optimal state with $P(\rho_{AE},\widetilde\rho_{AE}) \leq \ve$ such that $\widetilde\rho_{A:E} \leq \lambda \sigma'$ and $ D^\ve_{\max}(\rho_{A:E} \| \SEP) = 
\log\lambda$. Then there exists an extension $\widetilde\rho_{ABE}$ of $\widetilde\rho_{AE}$ such that $P(\rho_{ABE},\widetilde\rho_{ABE})\leq \ve$~\cite[Corollary~3.14]{TOMAMICHEL}. For any such extension, we have that
\begin{equation}\begin{aligned}
    \widetilde\rho_{A:BE} \leq |B| \id_{B} \otimes \widetilde\rho_{A:E} \leq \lambda |B|^2 \frac{\id_B}{|B|} \otimes \sigma',
\end{aligned}\end{equation}
where the first inequality is by the complete positivity of the map $\rho_{B} \mapsto \Tr(\rho_{B}) |B| \id_{B} - \rho_{B}$, easily verified by observing that the Choi operator of this map is $|B| \id_{BB} - |B| \Phi_{BB} \geq 0$. Since the state $ \frac{\id_B}{|B|} \otimes \sigma'$ is separable across $A:BE$, we have that $D^\ve_{\max}(\rho_{A:BE} \| \SEP) \leq \log (\lambda |B|^2)$.
\end{proof}

\end{document}